\documentclass[a4paper]{article}
\usepackage{authblk} 
\usepackage[pages=all, color=black, position={current page.south}, placement=bottom, scale=1, opacity=1, vshift=5mm]{background}
\SetBgContents{
}      

\usepackage[margin=1in]{geometry} 

\usepackage{amsmath}
\usepackage{amsthm}
\usepackage{amssymb}
\usepackage{lipsum}
\usepackage{xcolor}

\usepackage[utf8]{inputenc}
\usepackage[hidelinks]{hyperref}
\hypersetup{
	unicode,
	pdfauthor={Author One, Author Two, Author Three},
	pdftitle={A simple article template},
	pdfsubject={A simple article template},
	pdfkeywords={article, template, simple},
	pdfproducer={LaTeX},
	pdfcreator={pdflatex}
}

\usepackage{natbib}
\theoremstyle{plain}
\newtheorem{theorem}{Theorem}[section] %
\newtheorem*{proof*}{Proof} %
\newtheorem*{theorem*}{Theorem}
\newtheorem{lemma}[theorem]{Lemma}

\newtheorem{proposition}[theorem]{Proposition} %

\theoremstyle{definition}
\newtheorem{definition}[theorem]{Definition}

\newtheorem{remark}[theorem]{Remark}

\usepackage{graphicx, color}
\graphicspath{{Figures/}}

\usepackage{tikz}
\usetikzlibrary{calc}
\usetikzlibrary{positioning}
\usetikzlibrary{fit}
\usetikzlibrary{shapes,arrows}

\usepackage{algorithm, algorithmic} 
\usepackage{mathrsfs} 
\usepackage{subcaption}
\usepackage{bm}
\usepackage{xcolor}

\usepackage[overlay, absolute]{textpos}
\usepackage{xurl}
\usepackage[toc,page]{appendix}  
\usepackage{titletoc}
\usepackage[acronym]{glossaries}

\newcommand{\1}{\mathbf{1} }
\newcommand{\R}{\mathbb{R}}
\newcommand{\E}{\mathbb{E}}
\newcommand{\strictorder}{\1_{\alpha \leq \theta_1 \leq \cdots\leq \theta_n \leq \beta}}

\renewcommand{\P}{\mathcal{P}}
\newcommand{\betaperiode}{\mathcal{B}e_{(0, \beta-\alpha)}}

\graphicspath{{images/}}

\title{A Projection-Based Approach to Bayesian Age Estimation under Stratigraphic Constraints}

\author[1]{Imène Bouafia}
\author[1]{Anne Philippe}
\author[2]{Guillaume Guérin}

\affil[1]{Laboratoire de Mathématiques Jean Leray, UMR 6629  Nantes Université - CNRS, 44000 Nantes, France}
\affil[2]{Géosciences Rennes, UMR 6118 Univ Rennes - CNRS, 35000 Rennes, France}

\date{}

\begin{document}
\maketitle

\begin{abstract}

	In archaeological chronology building, stratigraphic constraints arising from the excavation context are commonly incorporated into Bayesian models through order-constrained priors.
However, these models suffer from well-documented artifacts: estimated
durations are inflated, and the bias worsens as more samples are dated.
First, we prove that this phenomenon is intrinsic to the prior: when all true ages coincide, the posterior duration converges to the prior support's width. We then propose an
unconstrained posterior that does not include stratigraphic knowledge and project it onto
the space induced by the stratigraphic order. The
projected posterior is the Wasserstein-closest constrained distribution;
it can be computed via isotonic regression on a directed acyclic graph and corrects duration inflation unlike constrained priors. 
Simulation studies and two archaeological case
studies demonstrate substantial bias reduction and precision gains.
   
\noindent\textbf{Keywords:} Bayesian Inference; Isotonic Regression; Geochronology;  Chronological Model
\end{abstract}

\tableofcontents
\clearpage

\section{Introduction}

Beyond merely contextualizing material remains, numerical dating in
archaeology essentially serves three purposes: 1) evaluate the timing
of behavioral and societal changes at one given site; 2) compare
evidence occurring at distant sites, to study peopling dynamics,
cultural transitions, human settlement patterns, the spread of
cultural practices, etc.; 3) place human remains and behaviors in
their environmental and climatic context, i.e.\ make the link between
archaeological evidence and independently dated proxies, such as, on
the one hand, human evolution and, on the other, ice cores recording past climate fluctuations. Nowadays, prehistoric archaeology is largely driven by the links between human and climate evolution, to estimate to what extent environmental fluctuations influence subsistence strategies in particular, and more generally cultural changes. In this article, we focus on the two latter issues - because they arguably are the most frequently encountered situations in archaeology. 

In practice, during archaeological excavations, samples $(c_1, \ldots, c_n)$ collected at different stratigraphic positions are dated using methods such as, for example, radiocarbon dating ($^{14}$C) (see \cite{hajdas_radiocarbon_2021}), which is based on radioactive decay, and Optically Stimulated Luminescence (OSL) (see \cite{murray_optically_2021}), which is known to be less precise than ($^{14}$C) but can be used to date much older materials.

Because they are based on the measurement of physical quantities, numerical dating methods come with uncertainties that hamper interpretations. To mitigate the issue, research efforts focus on reducing measurement uncertainties (e.g., \cite{fewlass2019pretreatment}, \cite{deviese2018new}), improving the calibration curve for radiocarbon (see, e.g., \cite{heaton2020intcal20}), increasing the number of dated samples, or statistical modelling. In particular, to leverage the spatial information obtained during the excavation, more reliable estimates of the true ages $(\theta_1, \ldots, \theta_n)$  can be obtained using statistical methods that incorporate stratigraphic constraints: the age $\theta_j$ of a given sample must be older than that of a sample lying above it. The principle of stratigraphy leads to an order structure, either total when all samples are in ascending order, or at least partial when batches of samples belong to the same stratigraphic unit inside of which their order is unknown.

The stratigraphic information is routinely incorporated into Bayesian chronological models because it provides valuable prior knowledge. Such models are now standard practice in radiocarbon and luminescence dating and are implemented in archaeological and geochronological applications through software packages such as ChronoModel
(see \cite{lanos_event_2018}), BayLum (see \cite{combes_bayesian_2017}), or OxCal (see \cite{bronk_ramsey_bayesian_2009}). The resulting chronologies directly inform archaeological interpretations through the estimation of phase durations, temporal gaps and rates of cultural change (\cite{philippe2020archaeophases}).

Nevertheless, these approaches have been demonstrated to suffer from systematic statistical artifacts and estimation biases. Early discussions of these issues \cite{steier_use_2000} identified shift phenomena arising when the number of measurements increases or when ages values $\theta \in \R^n$ are close or equal. This is a practical case, called high resolution sampling, when samples are being collected at very close stratigraphic positions on a small section. 

More recently, \cite{guerin_conflict_2026} provided a comprehensive analysis revealing that despite using different software implementations, all major Bayesian chronological model frameworks rely on the same underlying mathematical model, the uniform order prior (see for more details \cite{david_order_2004}), for handling stratigraphic constraints: a prior distribution that assigns zero probability to any chronology violating stratigraphic order. This approach seems natural but has revealed systematic problems. High resolution sampling leads to systematic inflation of estimated durations $d_n = \max\limits_{i} \theta_i - \min\limits_i \theta_i$, because of the uniform order prior. The result is that estimates of phase durations, occupation lengths and temporal gaps are systematically biased. These distortions are not merely technical artifacts, they also affect archaeological interpretations. Consequently, even moderate biases introduced by order constrained priors may propagate into substantial conclusions regarding settlement dynamics, cultural transitions, or rates of social change. A particular undesirable feature is that such biases are increased when the number of measured samples is increased: the more efforts are made to address one research question from a dating perspective, the worse the estimates become. Finally, this problem is also made worse when uncertainties are large.

One might propose abandoning stratigraphic constraints, but they are observations and should therefore not be treated as nuisances. The real challenge is: \textit{how can we enforce stratigraphic ordering without distorting the information carried by the measurements ?}
In other words, the issue is not whether stratigraphic information should be incorporated, but how. Our objective is therefore to enforce stratigraphic consistency while introducing as little additional information as possible beyond that contained in the dating measurements.

Additionally, current constrained Bayesian models require specialized Markov Chain Monte-Carlo (MCMC) schemes to ensure that posterior samples satisfy the imposed order 
throughout the simulation process, for instance \cite{yu2025bayesian}. Although effective, these approaches often involve truncated conditional distributions and carefully tuned proposal mechanisms, which can substantially increase implementation complexity and computational cost. As chronological models grow in size and complexity, these difficulties become increasingly pronounced.

A recent framework by \cite{astfalck2026posteriorprojectioninferenceconstrained} proposes a general strategy for handling constraints in Bayesian Inference: estimate an unconstrained posterior, then project onto the constraint-compatible space. 
We adapt this framework to geochronology, where it addresses a specific and previously under-recognized problem: the duration inflation bias induced by stratigraphic ordering. 
Our approach proceeds in two steps: \begin{enumerate}
    \item Estimate an unconstrained posterior from the measurement model, ignoring stratigraphic constraints.
    \item Project each posterior sample onto the space of chronologies consistent with stratigraphic constraints.
\end{enumerate}
The projected posterior has a principled justification: it is the Wasserstein-closest distribution to the unconstrained posterior among all distributions supported on the constrained chronology space. In other words, we enforce stratigraphic order while making the minimal distortion to the information carried by the measurements. 
The key contribution is showing that this separation, when applied to chronological inference, substantially reduces the duration inflation bias while preserving the archeological information in the measurements. To implement this practically, we encode stratigraphic constraints as a directed acyclic graph and reformulate the projection as an isotonic regression problem, enabling efficient computation via existing block-merging algorithms.

\noindent Our contributions are the following. First, we provide the first rigorous formalization of the duration-inflation
phenomenon reported empirically in  \cite{steier_use_2000}:
Theorem~\ref{thm:posterior_simplex} shows that, under the uniform-order prior and
when all true ages coincide, the posterior duration converges in probability
to the maximal value permitted by the prior support. The bias is therefore
driven by the prior rather than by the data, and it \emph{worsens} as the
number of dated samples increases. Second, we adapt the posterior-projection
framework of \cite{astfalck2026posteriorprojectioninferenceconstrained} to stratigraphic constraints
encoded by a directed acyclic graph: the projection reduces to a weighted
isotonic regression problem and the projected posterior is
characterized as the Wasserstein projection of the unconstrained posterior
onto the constrained space (Proposition~\ref{prop:projected_sampling}).
Theorem~\ref{thm:projected_posterior} then establishes that, in the same worst-case
scenario as Theorem~\ref{thm:posterior_simplex}, the projected posterior displays
the opposite asymptotic behavior: its interior coordinates concentrate.
Finally, we demonstrate the archaeological consequences on two case
studies: a re-analysis of the Çatalhöyük East radiocarbon dataset, and a
luminescence chronology of the Middle Palaeolithic sequence of Gatzarria
Cave, where the gain in precision allows, for the first time on the basis
of numerical dating, an association of the Quina Mousterian with a climatic event.

The remainder of the paper is organized as follows.
Section~\ref{sec:background} formalizes the chronological inference problem,
reviews the order-constrained Bayesian models, and establishes the
duration-inflation result.
Section~\ref{sec:projection} develops the projected-posterior framework for
stratigraphic constraints: we derive the projection onto the admissible
space as a weighted isotonic regression problem, specialize the posterior
projection theory to this setting, illustrate its geometry on a synthetic
two-dimensional example, and describe the computational algorithm.
Section~\ref{sec:simulations} evaluates the proposed
methodology through simulation studies based on the design of
\cite{steier_use_2000}, comparing its statistical and computational
performance with existing approaches. Section~\ref{sec:applications}
presents the two archaeological case studies, and
Section~\ref{sec:conclusion} concludes with a discussion of limitations
and perspectives.

\section{Background} \label{sec:background}
Suppose we have $n$ dated samples yielding age measurements $M_1,
\cdots M_n$, which constitute the observations in the statistical
model.  We assume these observations arise from a parametric family of
distribution $\lbrace f(\cdot \mid \theta),  \; \theta \in \Theta
\subset \R^n \rbrace$. 
The stratigraphic constraints are encoded in a DAG $G=(V,E)$, where the
vertices $V=\{1,\dots,n\}$ represent the samples and each edge
$(i,j)\in E$ encodes the observed stratigraphic relation ``sample $i$ is
older than or contemporaneous with sample $j$'', i.e.\ the constraint
$\theta_i\le\theta_j$. We write $E=\{e_1,\dots,e_m\}$. The associated constrained space  \begin{equation}
    \mathcal{P} = \lbrace \theta \in \R^n \ \mid \theta_i \leq \theta_j \ \forall \  (i, j) \in E \rbrace,
\end{equation}
also called an isotonic cone, which is a closed convex cone. \\
We choose a prior distribution  whose support is  the constrained space
$\mathcal{P}$, its density is of the form 
\begin{equation}
    \pi_\P(\theta) \propto \pi(\theta) \1_{\theta \in \P} \quad \forall
                     \theta \in \R^n
  \end{equation}
  where  $\pi$ is a density  supported on $\R^n$.
  The prior density can be rewritten as 
    \begin{equation}
  \pi_\P(\theta) \propto \pi(\theta) \prod_{i= 1}^{n}  \1_{\lbrack \ell_i, L_i \rbrack}(\theta_i) \label{eq:constrainedprior} \quad 
    \text{ with } 
    \begin{cases}
        \ell_i &=  \max \lbrace \theta_j  \ : \  j \in \mathrm{Pa}(i) \rbrace \\
        L_i &=  \min \lbrace \theta_j \ :  \ j  \in \mathrm{Ch}(i) \rbrace,
    \end{cases}
  \end{equation}
where the sets $\mathrm{Ch}$ and $\mathrm{Pa}$ define the children and parents of $i \in V$: 
\begin{equation*}
    \mathrm{Ch}(i) = \lbrace j \in V \mid (i,j) \in E \rbrace \qquad
    \mathrm{Pa}(i) = \lbrace j \in V \mid (j,i) \in E \rbrace. 
\end{equation*}

\noindent Therefore, the posterior distribution has a density
satisfying, for all   $\theta \in \R^n$ 
\begin{equation} \label{posterior}
    \pi_\P(\theta \mid M) \propto f(M \mid \theta) \pi_\P(\theta). 
\end{equation}

\noindent In the special case of a chain order \[
\P = \lbrace \theta \in \R^n \mid \theta_1 \leq \cdots \leq \theta_n \rbrace, \]
the prior definition \eqref{eq:constrainedprior} becomes:
\begin{align}
\pi_\P(\theta) &= \pi(\theta) \; \1_{\theta_1 \leq \theta_2} \; \1_{\theta_{n-1}\leq \theta_n} \; \prod_{i=2}^{n-1} \1_{\theta_{i-1} \leq \theta_i \leq \theta_{i+1}} \\
&= \pi(\theta) \; \1_{\theta_1\leq \cdots \leq \theta_n}.
\end{align}
A usual framework is to consider a known interval (geochronological period) $[\alpha, \beta] \subset \R$ and then construct a uniform order on this interval, allowing the constrained prior density to be given by:
\begin{equation} \label{eq:uodensity}
\pi_\P(\theta) = \frac{n!}{(\beta-\alpha)^n} \strictorder \quad \forall \theta \in \R^n.
\end{equation}

\noindent The artifacts arise from the resulting distribution of the total duration 
\[ d_n = \theta_n - \theta_1.\]
Under the uniform order prior, this duration follows a Beta distribution:
\begin{equation} \label{eq:duration}
 d_n \sim \betaperiode(n-1, 2), 
\end{equation}
where $\mathcal{B}e_{(\alpha, \beta)}(a,b)$ represents a Beta distribution $X \sim \mathcal{B}e(a,b)$ with the affine transformation $$ U = (\beta-\alpha) X + \alpha,$$
so that the density would be defined over the interval $[\alpha, \beta]$:
\begin{align*}
    f_{U}(u) &= \frac{1}{Be(a,b)}(u-\alpha)^{a-1} (\beta-u)^{b-1} \frac{1_{\lbrace u \in (\alpha, \beta) \rbrace}}{(\beta-\alpha)^{b+a-1} }.
\end{align*}
The expectation and variance reveal the problematic behavior:
\begin{align}
\mathbb{E}[d_n] &= (\beta-\alpha) \frac{n-1}{n+1} \xrightarrow{n \to \infty} \beta-\alpha, \\
\text{Var}(d_n) &= (\beta-\alpha)^2 \frac{2(n-1)}{(n+1)^2 (n+2)} \xrightarrow{n \to \infty} 0.
\end{align}
The distribution of the duration is strongly skewed toward its maximum possible value. Besides, if we draw the duration distribution, we observe that most of the probability mass concentrates near $(\beta-\alpha)$ over the interval $\lbrack 0, (\beta-\alpha) \rbrack$. This bias toward larger durations systematically underestimates the youngest coordinate $\theta_1$ and overestimates the oldest coordinate $\theta_n$.

\noindent Therefore, the range $d_n$ converges in probability to its maximal
possible value, $\beta-\alpha$, as $n\to\infty$. This means that the
uniform prior on $\mathcal P$ asymptotically favors vectors whose
range is close to its maximum.

\noindent The following theorem shows that the posterior distribution inherits
the same asymptotic behavior when the true ages are all equal.
\begin{theorem}\label{thm:posterior_simplex}

Consider Gaussian measurements, i.e. the likelihood function $f(\cdot \mid \theta)$ is the density of the Gaussian vector $\mathcal{N}(\theta, \sigma^2 I_n)$ in \eqref{posterior} with the prior distribution being \eqref{eq:uodensity}. \\
Assume that  the true data-generating model is
\begin{equation}\label{true}
M_i=\theta^0+\sigma \varepsilon_i,\qquad i=1,\ldots,n,
\end{equation}
where \(\theta^0\in(\alpha,\beta)\) and where
$\varepsilon_1,\ldots,\varepsilon_n$ are independent and identically
distributed standard normal random variables. Then, for  every \(\delta\in (0,\beta-\alpha)\), 
$$
\Pi_{\cal P }\!\left( \theta_n-\theta_1>(\beta-\alpha) -\delta \,\middle|\,M \right) \xrightarrow{\mathbb{P}_0-proba} 1
$$
where $\xrightarrow{\mathbb{P}_0-proba} $ is the convergence in probability under the true data-generating model. 
\end{theorem}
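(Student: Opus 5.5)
The plan is a change-of-variables (stretching) argument that compares the posterior mass of the ``short-duration'' event $A_\delta=\{\theta_n-\theta_1\le (\beta-\alpha)-\delta\}$ with the mass of its complement. Under \eqref{posterior} with prior \eqref{eq:uodensity}, the posterior density on the ordered set $\{\alpha\le\theta_1\le\cdots\le\theta_n\le\beta\}$ is proportional to $\prod_{i=1}^n g_i(\theta_i)$, where $g_i(t)=\exp\{-(M_i-t)^2/2\sigma^2\}$. So it suffices to show that $Z(A_\delta)/Z \to 0$ in $\mathbb{P}_0$-probability, where $Z(B)=\int_{B\cap\{\alpha\le\theta_1\le\cdots\le\theta_n\le\beta\}}\prod_i g_i(\theta_i)\,d\theta$.

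\textbf{Step 1 (discretise the event).} Cover $A_\delta$ by finitely many sets $A_{\delta,k}=\{\theta:\ \theta_1,\dots,\theta_n\in I_k\}$. Here the $I_k\subset[\alpha,\beta]$ are intervals of length $(\beta-\alpha)-\delta/2$, with left endpoints on a grid of mesh $\delta/2$. The number $K$ of such intervals does not depend on $n$, so it suffices to bound each $Z(A_{\delta,k})/Z$.

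\textbf{Step 2 (stretch map).} For each $k$, let $\phi_k$ be the increasing affine bijection from $I_k$ onto $[\alpha,\beta]$, with slope $\rho=(\beta-\alpha)/((\beta-\alpha)-\delta/2)>1$, and apply it coordinatewise. It preserves the order, maps $A_{\delta,k}$ into the ordered set, and has Jacobian $\rho^n$. Changing variables gives
\begin{equation*}
Z\ \ge\ \rho^{n}\int_{A_{\delta,k}}\prod_i g_i(\phi_k(\theta_i))\,d\theta ,
\end{equation*}
and therefore
\begin{equation*}
\frac{Z(A_{\delta,k})}{Z}\le \rho^{-n}\,\sup_{\theta\in A_{\delta,k}}\prod_{i=1}^n\frac{g_i(\theta_i)}{g_i(\phi_k(\theta_i))}.
\end{equation*}
Since $|\phi_k(t)-t|\le \delta/2$ on $I_k$, the log-ratio of the likelihood factors is at most $\sum_i\bigl(|M_i-\theta_i|\,\delta/2+\delta^2/8\bigr)/\sigma^2$. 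By the law of large numbers applied to $M_i=\theta^0+\sigma\varepsilon_i$, this is $n\,c(\delta)+o_{\mathbb{P}_0}(n)$, with $c(\delta)=O(\delta)$ controlled by $\E|\theta^0-t+\sigma\varepsilon|$.

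\textbf{Step 3 (compare exponents) — main obstacle.} The bound from Step 2 has the form $\exp\{-n[\log\rho-c(\delta)]+o(n)\}$. The entropy gain $\log\rho$ and the likelihood cost $c(\delta)$ are both linear in $n$ and both of order $\delta$, so the crude uniform displacement bound does not settle the sign. The first thing I would try is to choose the stretch adapted to the data rather than uniform. Specifically, I would stretch only the extreme coordinates: map $\theta_1$ and $\theta_n$ (together with a block of $o(n)$ nearby coordinates) toward $\alpha$ and $\beta$, and leave the bulk untouched. This keeps the likelihood cost at $o(n)$, while the volume gain comes from the freedom of the extreme order statistics. I would then use exchangeability of the $\varepsilon_i$: because the indices are matched to ordered positions independently of the noise, the cross terms $\sum_i\varepsilon_i(\theta_i-\theta^0)$ are only $O_{\mathbb{P}_0}(\sqrt n)$, as in the duration computation \eqref{eq:duration}. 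If this refinement works, each ratio tends to zero, and summing over the $K$ intervals gives $\Pi_{\cal P}(A_\delta\mid M)\to0$ in $\mathbb{P}_0$-probability, which is the claim. This balancing step is where I expect the genuine difficulty, and possibly extra conditions on $\sigma$ relative to $\beta-\alpha$ to be needed.
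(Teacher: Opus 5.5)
There is a genuine gap, and it sits exactly where you place the difficulty. Your Steps 1--2 are correct, but Step 3 is the entire content of the theorem, and the stretching strategy cannot supply it. With the supremum over $A_{\delta,k}$, the likelihood cost per coordinate in your bound is of order $\delta(\beta-\alpha)/\sigma^2+\delta/\sigma$. Take for instance all $\theta_i$ at the far end of $I_k$: even the exact signed cost is then about $\delta(\beta-\theta^0)/(2\sigma^2)$ per coordinate. The Jacobian gain is only $\log\rho\approx\delta/(2(\beta-\alpha))$. So your exponent is positive only when $\sigma$ is at least of the order of $\beta-\alpha$, yet the theorem holds for every $\sigma>0$ with no extra condition.

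Stretching only a block of extreme coordinates does not remove the obstruction. Either the block stretch factor $\lambda$ is close to $1$, and the same cost-versus-gain comparison reappears coordinate by coordinate. Or $\lambda$ is large, which confines the moved coordinates to a short window. On part of $A_\delta$ that window contains only $O(1)$ coordinates, and there the Jacobian gain $\lambda^{N}$, hence your bound, stays bounded. Your exchangeability heuristic for the cross terms also falls short of what is needed, which is a bound on $\sum_i\varepsilon_i(\theta_i-\theta^0)$ that is uniform over the monotone cone.

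The missing idea, which is the route the paper takes, is to separate the noise from the geometry and then use exchangeability.
\begin{itemize}
\item Since $M_i=\theta^0+\sigma\varepsilon_i$, the log-likelihood equals, up to a constant, $-\frac{1}{2\sigma^2}\sum_i(\theta_i-\theta^0)^2+\frac{1}{\sigma}R_n(\theta)$ with $R_n(\theta)=\sum_i\varepsilon_i(\theta_i-\theta^0)$.
\item Summation by parts on $\{\alpha\le\theta_1\le\cdots\le\theta_n\le\beta\}$ gives $\sup_\theta|R_n(\theta)|\le 3(\beta-\alpha)\max_{k\le n}|S_k|$, where $S_k=\varepsilon_1+\cdots+\varepsilon_k$. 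This is $O_{\mathbb P_0}(\sqrt n)$ by Kolmogorov's maximal inequality.
\item Hence $e^{-2\sup|R_n|/\sigma}G_n(A_\delta)\le\Pi_{\mathcal P}(A_\delta\mid M)\le e^{2\sup|R_n|/\sigma}G_n(A_\delta)$, where $G_n$ is the normalized weight $\prod_i e^{-(\theta_i-\theta^0)^2/2\sigma^2}$ on the ordered set.
\item This weight is a symmetric product, so $G_n$ is the law of the order statistics of $n$ i.i.d.\ $\mathcal N(\theta^0,\sigma^2)$ variables truncated to $[\alpha,\beta]$.
\item $A_\delta$ is contained in the union of the events that no point falls in $[\alpha,\alpha+\delta/2)$ and that no point falls in $(\beta-\delta/2,\beta]$. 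Hence $G_n(A_\delta)\le 2(1-\min(q_1,q_2))^n$, with $q_1,q_2$ the truncated-normal masses of these two strips.
\end{itemize}
This rate is strictly positive for every $\sigma$, so it beats $e^{O_{\mathbb P_0}(\sqrt n)}$, and no condition on $\sigma$ is needed.

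The lower bound also shows why your tool is the wrong one. It gives $\Pi_{\mathcal P}(A_\delta\mid M)\ge e^{-2\sup|R_n|/\sigma}(1-q_1-q_2)^n$. When $\sigma$ is small and $\delta/2<\min(\theta^0-\alpha,\beta-\theta^0)$, $q_1+q_2$ is exponentially small in $1/\sigma^2$. The true decay rate is therefore far below the order-$\delta$ quantities that a Jacobian-versus-likelihood comparison produces. The smallness comes from the $n!$ relabellings, namely that none of $n$ exchangeable points lands in a boundary strip. That is a combinatorial effect that no order-preserving coordinatewise deformation can capture.
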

\noindent The proof is provided in \ref{proof:posterior_simplex}. 

\noindent Theorem \ref{thm:posterior_simplex} considers the worst-case scenario, in which all samples correspond to the same geochronological period and therefore share the same true age.
This phenomenon leads to counter-intuitive phenomenon: increasing the number of observations degrades the quality of age estimates. As sample size $n$ grows, the bias intensifies, shifting estimates further from the true ages. This is particularly problematic in practice because high-frequency sampling, which archaeologists might reasonably employ to achieve more accurate chronologies;  produces increasingly biased results under the uniform order model. Moreover, this represents a poor return on investment: dense sampling is costly in terms of both time and resources (laboratory analysis, fieldwork effort), yet rather than improving precision, it systematically distorts the chronology.

\noindent To address these issues,  \cite{nicholls_radiocarbon_2001} tackled the systematic bias inherent in the uniform order model and proposed a modified prior. 
Their approach transforms the distribution of $d_n$ into a uniform distribution over the total duration
interval $[0 ,\beta-\alpha]$:
\begin{equation}
d_n \sim \mathcal{U}(0, \beta-\alpha).
\end{equation}
Moreover, the first parameter in the order chain, conditional to duration $d_n$ is:
\begin{equation}
    \theta_1 \sim \mathcal{U}(\alpha, \beta- d_n),
\end{equation}
and $\theta_n = \theta_1 +d_n$.
Finally, conditional to first $\theta_1$ and last $\theta_n$, we have \begin{equation}
    (\theta_2, \cdots, \theta_{n-1}) \sim \mathcal{U}^*(\theta_1, \theta_n),
\end{equation}
where $\mathcal{U}^*(\theta_1, \theta_n)$ representing the uniform order distribution over the interval $\lbrack \theta_1, \theta_n \rbrack$. \\
Their correction modifies the original uniform order density \eqref{eq:uodensity} to the following expression:
\begin{equation} \label{eq:nicholls}
\pi_{\P}(\theta) = \frac{(n-2)!}{(\beta-\alpha)} \cdot \frac{1}{\bigg((\beta-\alpha) - (\theta_n-\theta_1)\bigg)(\theta_n-\theta_1)^{n-2} } \strictorder.
\end{equation}
The power of this approach stems from its hierarchical structure: we first set a duration $d_n$ before constructing the order structure thus, we are more accurate than the plain uniform 
order \eqref{eq:uodensity}. Despite its effectiveness in addressing duration bias, the prior \eqref{eq:nicholls}, for the considered archeological applications, can only address chain order structures.

\section{Adaptation of the projected posterior theory}\label{sec:projection}

\subsection{Projector for stratigraphic constraints}
Suppose an estimate $\theta' \in \R^n$ that does not include any stratigraphic knowledge. It can violate the observed constraints and $\theta' \not \in \P$. We set the projection map to satisfy the order constraints, and choose a weighted  $\ell_2$ norm objective so that samples with high posterior certainty resist movement.
\noindent The projector is defined as the following: \begin{equation} \label{eq:QP}
    T_\P(\theta') = \arg \min\limits_{\theta \in \P} \sum_{i=1}^{n} w_i (\theta'_i-\theta_i)^2,
\end{equation}
where $W$ is a positive definite and diagonal matrix. The objective function is separable and the diagonals terms are  $w_i = Var(\theta_i \mid M)^{-1} \ \forall i \in \lbrace 1, \cdots, n \rbrace $, the inverse variance under an unconstrained posterior distribution, formally defined in \eqref{eq:unconstrained}. 
On one hand, ages $\theta_i$ with high precision have large weights $w_i$, so they are heavily penalized if moved. On the other hand, ages with high uncertainty, have low weights, allowing them to adjust freely to satisfy  stratigraphic constraints. 

\noindent This is  a weighted Isotonic Regression (IR) problem, which is a convex optimization problem, with a strictly convex and coercive objective function over the convex space $\P$ (the DAG structure). The projection $T_\P$ admits a unique solution for each $\theta' \in \R^n$. 

\begin{proposition} \label{prop:block_solution}
    
The optimal solution $\theta$ can be partitioned into blocks 
\begin{equation}
    \label{eq:block_partition}
    \lbrace \theta_1, \cdots, \theta_n\rbrace = B_1 \cup \cdots \cup B_K \quad K \in \lbrace 1, \cdots, n\rbrace.
\end{equation}
Within each block, the projected chronology is constant and equals the precision-weighted average: 
\[
\theta_o = \frac{\sum_{i \in B_o} w_i \theta'_i}{\sum_{i \in B_o} w_i}. 
\]
\end{proposition}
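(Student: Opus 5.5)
We prove Proposition~\ref{prop:block_solution} from the KKT optimality conditions of the weighted isotonic regression \eqref{eq:QP}. Let $\theta^\star = T_\P(\theta')$ be the unique minimizer, which exists by strict convexity and coercivity. Define the blocks $B_1,\dots,B_K$ as the level sets of $\theta^\star$: we group indices according to the distinct values $c_1<\dots<c_K$ taken by $\theta^\star$. Then $K\in\{1,\dots,n\}$ automatically, and $\theta^\star$ is constant on each block by construction. The substance of the statement is the identification of the common value $c_o$ on each $B_o$ as the precision-weighted mean of the $\theta'_i$, $i\in B_o$.

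\textbf{Key step (variational argument).} Fix a block $B_o$. For $t\in\R$, perturb $\theta^\star$ by adding $t$ to every coordinate in $B_o$ and leaving the others unchanged. Every constraint $\theta_i\le\theta_j$ with $(i,j)\in E$ is either internal to $B_o$ or external. An internal constraint is unaffected, since both sides shift equally. An external one involves an index in another block and is strict at $\theta^\star$, because distinct blocks carry distinct values. Hence there is $\varepsilon>0$ such that the perturbed vector stays in $\P$ for all $|t|<\varepsilon$. The objective along this path is
\[
\phi(t)=\sum_{i\in B_o} w_i(\theta'_i-c_o-t)^2+\text{const}.
\]
It is a smooth function with an interior minimum at $t=0$, so $\phi'(0)=0$. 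This gives $\sum_{i\in B_o}w_i(\theta'_i-c_o)=0$, that is,
\[
c_o=\frac{\sum_{i\in B_o}w_i\theta'_i}{\sum_{i\in B_o}w_i},
\]
using $w_i>0$ from the positive definiteness of $W$.

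\textbf{Main point to check.} The only delicate point is the feasibility of the block-shift perturbation, namely that all binding constraints lie inside blocks. Choosing level sets as blocks makes this immediate. If one prefers the finer partition into connected components of the subgraph of $G$ formed by active edges ($\theta^\star_i=\theta^\star_j$), the same argument works verbatim. Inactive edges are strict at $\theta^\star$, so a small shift remains feasible, and the stationarity condition is again the weighted-mean identity.

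Alternatively, one can write the Lagrangian with multipliers $\lambda_e\ge0$ and use complementary slackness. Summing the stationarity equations $2w_i(\theta^\star_i-\theta'_i)+\sum_e\lambda_e(\cdot)=0$ over a block, the multipliers of internal edges cancel in pairs and those of external edges vanish. This yields the same conclusion.
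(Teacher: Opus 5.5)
Your proof is correct, but its main argument differs from the paper's. The paper uses the Lagrangian of the quadratic program. It introduces multipliers $\lambda_t \ge 0$ for the edge constraints $C\theta\le 0$ and uses complementary slackness to define the blocks as connected components of the subgraph of active edges. It then sums the stationarity equations $2w_i(\theta_i-\theta'_i)+\sum_{t\in E_i^+}\lambda_t-\sum_{s\in E_i^-}\lambda_s=0$ over a block: internal-edge multipliers cancel in pairs and boundary-edge multipliers vanish. That is essentially the alternative you sketch in your last paragraph.

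Your block-shift argument is the multiplier-free version of the same first-order condition. Summing the stationarity equations over $B_o$ amounts to pairing them with the direction $\mathbf{1}_{B_o}$. The two facts the paper relies on (internal edges are unaffected by this direction, boundary edges are inactive) are exactly what make $\pm t\,\mathbf{1}_{B_o}$ a two-sided feasible direction in your proof.

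Your route is more elementary. It needs only the definition of a minimizer, not the existence of KKT multipliers; these exist here because the constraints are linear, but the paper uses them without comment. It also works verbatim for any differentiable objective $f$, giving $\sum_{i\in B_o}\partial_i f(\theta^\star)=0$. This fits the paper's remark that the block-merging algorithm extends to other separable losses.

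The KKT route keeps the multipliers. For this convex problem, KKT conditions are also sufficient, so the sign conditions on the multipliers of active edges certify that a given block partition is optimal. Your argument gives only necessary conditions along block directions, so it does not do this, though the proposition as stated does not require it.

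Your level-set blocks give a weaker statement than connected components of active edges, since the latter implies the former by summing over components. It is worth keeping the finer version you mention, which matches the paper's partition.
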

\noindent This result is a mathematical consequence of Karush-Kuhn-Tucker (KKT) conditions. The practical insight is that optimal projection merges groups of samples and assigns each group its precision-weighted mean, respecting stratigraphic constraints. The proof can be found in \ref{proof:block_solution}.

\subsection{Theory of Posterior Projection} \label{section:theory_projection}

The section \ref{sec:background} highlighted the limitations of incorporating stratigraphic constraints directly into the Bayesian model through an ordered prior. This naturally raises the following question: can statistical inference and constraint enforcement be separated? More precisely, can one first infer an unconstrained posterior distribution from the measurement model alone, and subsequently incorporate the stratigraphic information through the projection map $T_\P$?

\noindent Starting from the unconstrained posterior
\begin{equation}\label{eq:unconstrained}
    \pi(\theta\mid M)\propto f(M\mid\theta)\pi(\theta),
\end{equation}
we define the projected posterior as the push-forward measure
\begin{equation}\label{eq:pushfoward}
    \widetilde{\Pi}(\cdot\mid M)
    =
    (T_{\mathcal P})_\#\,\Pi(\cdot\mid M).
\end{equation}
The prior $\pi$ is chosen independently of the stratigraphic
constraints, so that the unconstrained posterior is supported on the
whole parameter space $\Theta$ (e.g.  $\R^n$ or $\lbrack \alpha, \beta \rbrack^n$). Typical choices include  the improper flat prior
\begin{equation}
  \label{eq:flat}
  \pi(\theta)\propto 1_{\R^n},
\end{equation}
 or a uniform prior over a sufficiently large study period
\begin{equation} \label{eq:uniform_prior}
    \pi(\theta)
=
\frac{\mathbf 1_{[\alpha,\beta]^n}(\theta)}
{(\beta-\alpha)^n}. 
\end{equation}

The remaining question is whether the projected posterior admits a principled statistical interpretation. The answer is provided by the posterior projection framework of \citet{astfalck2026posteriorprojectioninferenceconstrained}. Their main result shows that the push-forward measure is not merely a convenient computational construction: it is the distribution supported on the constrained space that is closest to the unconstrained posterior in the Wasserstein metric. We now specialize this result to the isotonic projection $T_\P$ considered in this work.

\noindent Let $\mathcal{Q}(\R^n)$ (respectively $Q(\P)$) denote the set of probability measures with support in $\R^n$ (respectively $\P$) and finite second-order moments. 
The weighted Wasserstein-2 distance between measures $\mu, \nu \in Q(\R^n)$ is:

\begin{equation}  \label{def:wasserstein}
    \mathcal{W}(\mu, \nu) = \inf_{\gamma \in \Gamma(\mu,\nu) } \mathbb{E}_\gamma[\|X - Y\|^2_W] ^ {1/2} 
\end{equation}
where $\Gamma(\mu,\nu)$ is the set of measures over $\R^n\times \R^n$ with $\mu$, $\nu$ as marginal measures, and 
$\forall x \in \R^n \; \|x\|^2_W = x^T W x$ is the squared precision-weighted norm.

\begin{proposition}[Projected Posterior via Convex Optimization]\label{prop:projected_sampling}
Let $ \Pi(. \mid M)$ denote the unconstrained posterior measure define in \eqref{eq:unconstrained}. Suppose $\Pi(. \mid M)$ has a finite  second-order moment.
The projected posterior distribution is characterized as:
\[  \tilde{\Pi}_{\P} (. \mid M) = \underset{\nu \in \mathcal{Q}(\P)}{\arg \min }\lbrace \mathcal{W}(\nu, \Pi(. \mid M) ) \rbrace, \]
where $\mathcal{W}(\cdot, \cdot)$ is the weighted Wasserstein distance define in~\eqref{def:wasserstein}.\\
That is, $\tilde{\Pi}(. \mid M)$ is the closest distribution to the unconstrained posterior among all distribution in $\mathcal{Q}(\P)$.
\end{proposition}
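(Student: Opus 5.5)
The plan is to prove that $\tilde\Pi(\cdot\mid M)=(T_\P)_\#\Pi(\cdot\mid M)$ attains $\inf_{\nu\in\mathcal Q(\P)}\mathcal W(\nu,\Pi(\cdot\mid M))$ by a lower bound that holds for every coupling, plus a coupling that achieves it. The key fact is that $T_\P$ in \eqref{eq:QP} is the metric projection onto the closed convex set $\P$ for the norm $\|\cdot\|_W$. Indeed, the objective in \eqref{eq:QP} is exactly $\|\theta'-\theta\|_W^2$. Since $W$ is diagonal positive definite, $\langle x,y\rangle_W=x^TWy$ is an inner product, and $(\R^n,\langle\cdot,\cdot\rangle_W)$ is a Hilbert space. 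Hence $T_\P$ is well defined and unique, as already noted after \eqref{eq:QP}, and it is measurable, even $1$-Lipschitz for $\|\cdot\|_W$. The defining inequality we will use is
\[
\|x-T_\P(x)\|_W\le \|x-y\|_W\qquad \text{for all } x\in\R^n,\ y\in\P .
\]

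First, check that the candidate lies in $\mathcal Q(\P)$. It is supported in $\P$ because $T_\P$ takes values in $\P$. For the second moment, fix $p_0\in\P$; Lipschitz continuity gives $\|T_\P(x)\|_W\le \|T_\P(p_0)\|_W+\|x-p_0\|_W$. Since $\Pi(\cdot\mid M)$ has a finite second moment, so does the push-forward. The Euclidean and $W$-norms are equivalent, so the moment condition does not depend on which norm is used.

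Next, the lower bound. Take any $\nu\in\mathcal Q(\P)$ and any $\gamma\in\Gamma(\Pi(\cdot\mid M),\nu)$. Then $\gamma$-almost surely $Y\in\P$, so $\|X-Y\|_W^2\ge \|X-T_\P(X)\|_W^2$. Integrating and taking the infimum over $\gamma$ gives
\[
\mathcal W(\nu,\Pi)^2\ge \E_{\Pi}\big[\|X-T_\P(X)\|_W^2\big].
\]

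For attainment, use the deterministic coupling $\gamma^\star=(\mathrm{id},T_\P)_\#\Pi(\cdot\mid M)$. Its marginals are $\Pi(\cdot\mid M)$ and $\tilde\Pi$, and its cost equals the lower bound. Therefore $\tilde\Pi$ is a minimizer, and $\mathcal W(\tilde\Pi,\Pi)^2=\E\|X-T_\P X\|_W^2$.

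The main obstacle is the "the" in the statement, i.e. uniqueness of the minimizer. Suppose $\nu$ is another minimizer, and let $\gamma$ be an optimal coupling, which exists by weak compactness and lower semicontinuity of the cost. Equality in the lower bound forces $\|X-Y\|_W=\|X-T_\P(X)\|_W$ $\gamma$-almost surely. Uniqueness of the projection onto a closed convex set in a Hilbert space then gives $Y=T_\P(X)$ almost surely, so $\nu=(T_\P)_\#\Pi=\tilde\Pi$. The existence of an optimal coupling would be cited from standard optimal transport theory (Villani), or taken directly from \citet{astfalck2026posteriorprojectioninferenceconstrained}, of which this statement is a specialization.
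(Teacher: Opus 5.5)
Your proof is correct, but it takes a different route from the paper. The paper does not argue optimality directly. It checks the hypotheses of Theorem~2 of \citet{astfalck2026posteriorprojectioninferenceconstrained}: $\R^n$ with $\|\cdot\|_W$ is a separable Banach space, $\P$ is nonempty, closed and convex, and $T_\P$ exists, is unique and is $1$-Lipschitz, hence Borel, so the push-forward is well defined. It then invokes that theorem.

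You instead give a self-contained optimal-transport argument. The nearest-point inequality $\|x-T_\P(x)\|_W\le\|x-y\|_W$ for $y\in\P$, integrated against an arbitrary coupling, gives the lower bound $\mathcal W(\nu,\Pi)^2\ge \E_\Pi\|X-T_\P X\|_W^2$. The deterministic coupling $(\mathrm{id},T_\P)_\#\Pi$ attains it.

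Your route makes explicit several points the paper leaves to the citation:
\begin{itemize}
\item that $\tilde\Pi$ actually lies in $\mathcal Q(\P)$, i.e.\ has a finite second moment, via your Lipschitz bound;
\item the value of the minimal distance;
\item uniqueness of the minimizer, which is what justifies writing the $\arg\min$ as an equality.
\end{itemize}
The paper's route is shorter and inherits whatever generality the cited theorem has.

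As a side remark, you can drop the existence of an optimal coupling from your uniqueness step. For $y\in\P$, the Hilbert-space projection inequality $\|x-y\|_W^2\ge\|x-T_\P x\|_W^2+\|T_\P x-y\|_W^2$, integrated against any coupling of $\Pi$ and $\nu$, gives
\[
\mathcal W(\nu,\Pi)^2\ge\mathcal W(\tilde\Pi,\Pi)^2+\mathcal W(\nu,\tilde\Pi)^2 .
\]
Hence any minimizer satisfies $\mathcal W(\nu,\tilde\Pi)=0$, i.e.\ $\nu=\tilde\Pi$.
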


\begin{proof}
    This proposition is an immediate result of the Theorem 2 in \cite{astfalck2026posteriorprojectioninferenceconstrained}, the proof is given in \ref{proof:projected_sampling}.
\end{proof}
\noindent To understand what happens when we project the unconstrained posterior, we characterize the geometry of this operation.  The normal cone plays a central role in the following developments.
\begin{definition}
    Let $\P$ be a convex subset of $\R^n$. The normal cone to $\P$ at $\theta \in \P$ is \[N(\theta, \P) = \lbrace v \in \R^n \mid \sum w_i v_i (\xi_i - \theta_i) \leq 0 \ \forall \xi \in \P \rbrace. \]
    By convention, $N(\theta, \P) = \emptyset \text{ if } \theta \notin \P$.
\end{definition}
\noindent The normal cone captures the geometry of feasibility. A key property relates the pre-image (all points that project to $\theta$) to the normal cone:
\begin{equation}
    T_{\P}^{-1}(\lbrace \theta \rbrace) = \theta +  N(\theta, \P).
\end{equation}

\noindent Geometrically, this process involves taking a point $\theta'$ and projecting it onto the nearest point within the space $\P$. Points in $\P$ remain unchanged, while points outside are projected onto $\P$ along a normal direction. This effectively reassigns the probability mass from the cone $N(\theta, \P)$ to the boundary point $\theta$, leading to a higher mass concentration at the boundary and a distortion of the original density. Using the normal cone characterization, the projected posterior can be expressed as follows 
\begin{equation}
    \widetilde{\Pi}(B\mid M)  = \Pi\left( \bigcup_{\theta \in B}
      (\theta +N(\theta, P) ) \ \mid \ M\right)  \quad \forall \ B \in  \mathcal{B}_\P, 
\end{equation}
where $\mathcal{B}_{\P}$ is the Borel $\sigma$-algebra on $\P$.

\noindent Besides, the posterior induced by the uniform prior on $\mathcal P$ in \eqref{eq:uodensity}, 
concentrates on monotone vectors with an asymptotically maximal range (see Theorem \ref{thm:posterior_simplex}),
this projected posterior displays the opposite asymptotic behavior: its interior coordinates collapse together.

\begin{theorem}
\label{thm:projected_posterior}
Consider Gaussian measurements, i.e. the likelihood function $f(\cdot \mid \theta)$ is the density of the Gaussian vector $\mathcal{N}(\theta, \sigma^2 I_n)$ in \eqref{eq:unconstrained} with the prior distribution being \eqref{eq:flat}. \\
If the true data-generating model is
\eqref{true},  then, for all \(0<a<b<1\) and for all \(\eta>0\),
the projected posterior distribution satisfies
$$
\widetilde{\Pi}\!\left(
\theta_{\lfloor bn\rfloor}-\theta_{\lfloor an\rfloor}>\eta
\,\middle|\,M
\right)
\xrightarrow{\mathbb{P}_0-proba}0,
$$
where $\xrightarrow{\mathbb{P}_0-proba} $ is the convergence in probability under the true data-generating model \eqref{true}. 
\end{theorem}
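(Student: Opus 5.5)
Throughout, $\P$ is the chain cone $\{\theta_1\le\cdots\le\theta_n\}$. With the flat prior \eqref{eq:flat} and Gaussian likelihood, the unconstrained posterior is $\mathcal N(M,\sigma^2 I_n)$. So a draw is $\theta'=M+\sigma Z$ with $Z\sim\mathcal N(0,I_n)$ independent of $M$. Under \eqref{true} this becomes $\theta'_i=\theta^0+\sigma(\varepsilon_i+Z_i)$. All weights are equal ($w_i=\sigma^{-2}$), so $T_\P$ is ordinary unweighted isotonic regression onto the chain. It is equivariant under translation and positive scaling, so
\[
T_\P(\theta')=\theta^0\mathbf 1+\sigma\sqrt2\, T_\P(\xi), \qquad \xi_i=(\varepsilon_i+Z_i)/\sqrt2 .
\]
Under the joint law of $(\varepsilon,Z)$ the $\xi_i$ are i.i.d.\ $\mathcal N(0,1)$. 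The strategy is to reduce the claim to a statement about isotonic regression of pure noise, i.e.\ with a constant true signal.

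The first key step is a first-moment reduction. Write $g(M)=\widetilde\Pi(\theta_{\lfloor bn\rfloor}-\theta_{\lfloor an\rfloor}>\eta\mid M)$. Since $g\ge0$, Markov's inequality gives $\mathbb P_0(g(M)>\epsilon)\le \E_0[g(M)]/\epsilon$. Moreover
\[
\E_0[g(M)]=\mathbb P\big(\sigma\sqrt2\,(\hat\xi_{\lfloor bn\rfloor}-\hat\xi_{\lfloor an\rfloor})>\eta\big),
\]
where $\hat\xi=T_\P(\xi)$ is the isotonic regression of i.i.d.\ standard normals. It therefore suffices to show $\hat\xi_{\lfloor bn\rfloor}-\hat\xi_{\lfloor an\rfloor}\to0$ in probability.

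The second key step uses the min–max formula for chain isotonic regression:
\[
\hat\xi_k=\max_{i\le k}\min_{j\ge k}\bar\xi_{i:j}, \qquad \bar\xi_{i:j}=\frac{1}{j-i+1}\sum_{l=i}^j\xi_l .
\]
For the upper bound on $\hat\xi_{\lfloor bn\rfloor}$, take $j=n$ inside the min to get $\hat\xi_{\lfloor bn\rfloor}\le\max_{i\le \lfloor bn\rfloor}\bar\xi_{i:n}$. Every such block has length at least $(1-b)n$. By Kolmogorov's (or Doob's) maximal inequality applied to reversed partial sums, $\max_{i\le bn}|\bar\xi_{i:n}|\le \frac{1}{(1-b)n}\max_{m\le n}|S'_m|=O_P(n^{-1/2})$, where $S'_m$ is a partial sum taken from the right. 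Symmetrically, taking $i=1$ gives $\hat\xi_{\lfloor an\rfloor}\ge\min_{j\ge \lfloor an\rfloor}\bar\xi_{1:j}\ge -\frac{1}{an}\max_{m\le n}|S_m|=O_P(n^{-1/2})$. Since $\hat\xi$ is nondecreasing, the difference is nonnegative. Combining the two bounds gives
\[
0\le \hat\xi_{\lfloor bn\rfloor}-\hat\xi_{\lfloor an\rfloor}=O_P(n^{-1/2}),
\]
hence it tends to $0$ in probability. Plugging this into the Markov step yields the theorem.

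The main obstacle I expect is justifying the min–max representation and the uniform $O_P(n^{-1/2})$ control over blocks. The indices at the boundary ($k$ near $1$ or $n$) are exactly why the statement restricts to $0<a<b<1$: there, blocks of length at least a constant fraction of $n$ are available on the appropriate side. I would either cite the classical min–max formula (Robertson–Wright–Dykstra) or derive it from the block structure of Proposition~\ref{prop:block_solution} via the pool-adjacent-violators characterization. The maximal inequality $\mathbb P(\max_{m\le n}|S_m|>t)\le n/t^2$ then closes the argument with $t=C\sqrt n$.
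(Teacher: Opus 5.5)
Your proposal is correct and follows essentially the paper's proof: the same reduction to unweighted isotonic regression of i.i.d.\ centered noise (via the joint law of $(M,\theta')$ and Markov's inequality), the same max--min formula truncated at $u=1$ and $v=n$, and the same control by maximal partial sums. The differences are minor. You use Kolmogorov's maximal inequality to obtain an $O_{\mathbb P}(n^{-1/2})$ rate where the paper uses the strong law for almost-sure convergence, and you use monotonicity of $\hat\xi$ so that only one-sided bounds are needed at each index; there is also a harmless slip, since the denominator in your lower bound should be $\lfloor an\rfloor$ rather than $an$.
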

\noindent The proof can be found in \ref{proof:projected_posterior}.

\noindent An advantage of the proposed methodology lies in the interpretation of the projected posterior as a non-informative distribution with respect to the partial order constraints. The projection approach defines the constrained posterior as the Wasserstein $\mathcal{W}$ projection of the unconstrained posterior onto $\P$. Explicitly, it identifies, among all probability measures satisfying the required ordering, the one that is closest to the unconstrained posterior  for the $ \mathcal{W}$ distance.
This contrasts with classical archaeological Bayesian models based on priors of the form \eqref{eq:constrainedprior}. For instance, the choice of the prior \eqref{eq:uodensity} induces non-negligible distortions, often leading to the well-known duration-inflation bias as explained in section \ref{sec:background}.

\subsection{Synthetic two-dimensional example}
\noindent To gain intuition about the geometry of the projected posterior, we study a simple two-dimensional example for which the projected posterior density can be derived in closed form.

    \label{dummy_example}
\noindent Consider an unconstrained Gaussian prior with known standard-deviations $\tau$:
\[ \mathcal{N}(\mu ,  \tau^2 I_2),\] and a Gaussian likelihood with the known standard deviation $\sigma$:  \[M \sim \mathcal{N}(\theta, \sigma^2I_2).\]
The unconstrained posterior  $\Pi(\cdot \mid M)$ is the Gaussian distribution $\mathcal{N}_2(\mu_{M} ,  \sigma_M^2 I_2),$ where
 \begin{equation*}
    \mu_{M_i} = \frac{M_i \tau^2 + \mu \sigma^2}{\sigma^2 + \tau^2 } \ \text{; }  i \in \lbrace 1,2\rbrace \quad
        \sigma_M^2 = \frac{\sigma^2 \tau^2}{\sigma^2 + \tau^2}.
 \end{equation*}
The constrained parameter space is  $\P = \lbrace (x_1, x_2) \in \R^2 \mid x_1 \leq x_2 \rbrace$.\\
This implies that samples $\theta' \sim \Pi(\cdot \mid M)$ located above the diagonal $D = \lbrace \begin{pmatrix}
        x \\ y
    \end{pmatrix} \in \R^2 \mid x = y \rbrace,$ remain unchanged, while those below are projected onto the diagonal. The projection map admits the explicit form
\[\theta = T_\P(\theta') = \begin{cases}
     \theta' & \text{ if } \theta'_1 \leq \theta'_2 \\
    \frac{\theta'_1 + \theta'_2}{2}  \begin{pmatrix}
        1 \\ 1
    \end{pmatrix}  &\text{ if } \theta'_1 > \theta'_2.
\end{cases}
\]
The projector $T_\P$ is continuous over $\R^2$, although it is not differentiable on the boundary of $D$, where the active set changes. Applying the definition \eqref{eq:pushfoward} yields the following Radon-Nikodym density of the projected posterior (derivation is provided in \ref{proof:dummy_example}).
\begin{equation*}
    \tilde{\pi}_{\P}(\theta \mid M) \propto \phi(\theta_1; \mu_{M_1}, \sigma_M ) \phi(\theta_2; \mu_{M_2}, \sigma_M )  \1_{\theta_1 < \theta_2} + g_M(\theta)\1_{\theta_1 = \theta_2},
\end{equation*}
with respect to the measure: \[\lambda(B) = \mu_\R(D \cap B) + \mu_{\R^2}(\P \cap B), \] and where
\begin{equation*}
    g_M(\theta) \propto  \phi\bigg(\theta;\frac{\mu_{M_1}+\mu_{M_2}}{2}, \frac{\sigma_M}{\sqrt{2}} \bigg)\times  \Phi\bigg(\frac{\mu_{M_1}-\mu_{M_2}}{\sqrt{2}\sigma_M}\bigg).
\end{equation*}
The function $\phi(.; m, \sigma^2)$ denotes the density of the Gaussian distribution  $\mathcal{N}(m ,  \sigma^2)$, $\Phi$, the cumulative distribution function of $\mathcal{N}(0 ,   1)$.

\begin{remark}
    The density expression naturally decomposes into two components. Inside the constrained space $\P$, the density coincides with the unconstrained Gaussian posterior density. By contrast, probability mass associated with infeasible samples is transported onto the boundary $D$. Consequently, the projected posterior contains an absolutely continuous component on the interior of $\P$ together with a singular component supported on the diagonal leading to a change of measure (denoted $\lambda$). 
\end{remark}

\noindent This simple example provides insight into the behavior of the projected posterior in situations where the ordering constraint conflicts with the measurements. We therefore consider two neighbouring events with true ages $(10, 11)$, and focus on the case where measures are reversed, $M_1 > M_2$. This scenario is illustrated in Figure \ref{fig:DummyexampleProj}. 

\begin{figure}
    \centering
    \includegraphics[width = .9\linewidth]{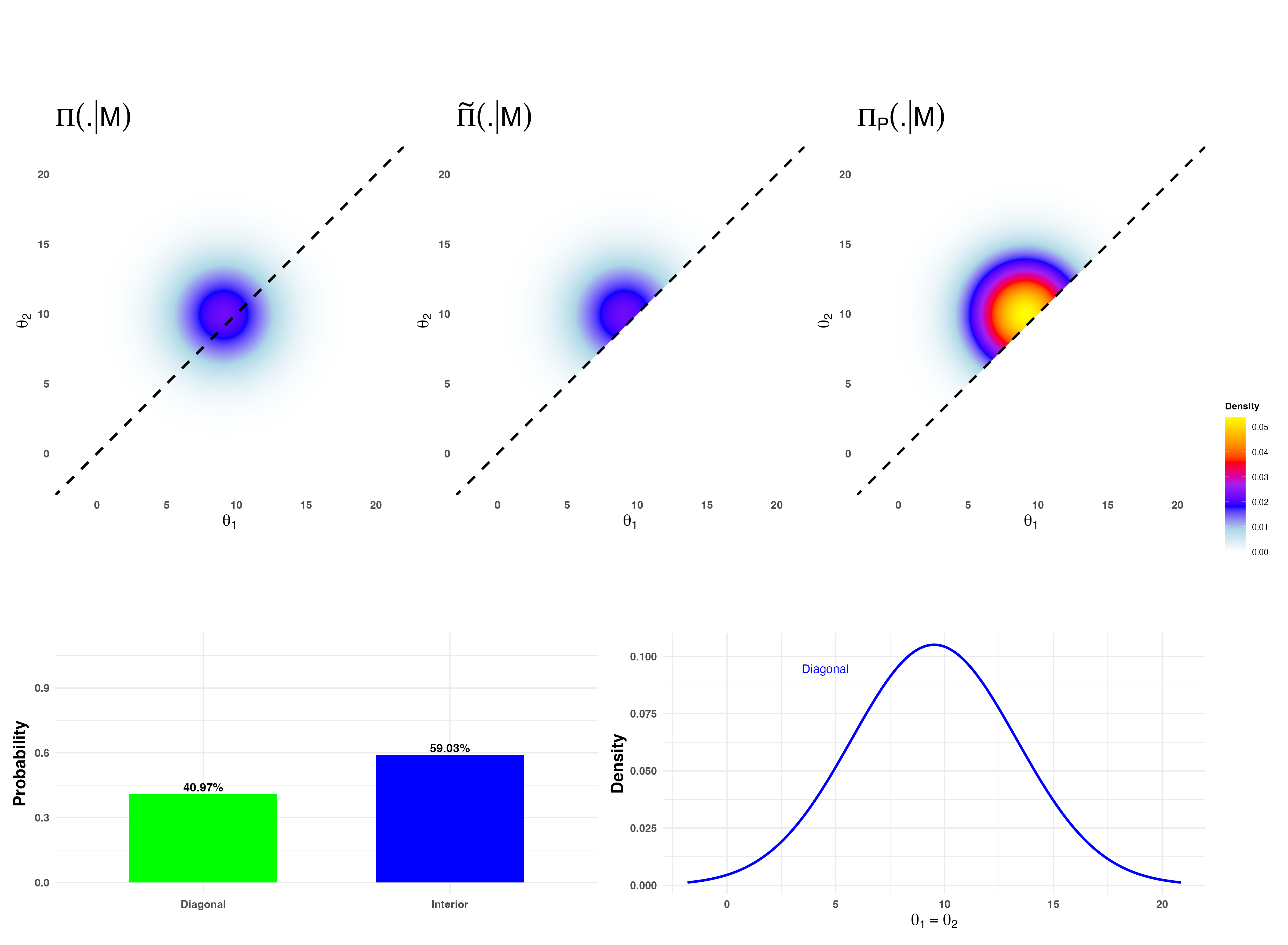}
    \caption{Two-dimensional Gaussian example. Top row: unconstrained posterior (left), projected posterior onto the isotonic cone $\P$ (middle), and constrained Bayesian posterior (right). Since the projected posterior assigns positive probability mass to the diagonal $D$, this singular component is displayed separately in the bottom-right panel. The bottom-left panel reports the posterior probabilities of lying inside and outside the admissible region $\P$.}
    \label{fig:DummyexampleProj}
\end{figure}
\noindent In the top-left panel of Figure \ref{fig:DummyexampleProj}, one can observe that a substantial proportion of the posterior mass lies outside the constrained space~$\P$, with high-density regions appearing below the diagonal $D$. The corresponding probability of violating the order constraint is summarized in the bottom-left bar plot. \\
It is helpful to compare this projected-posterior approach with the alternative strategy of imposing the constraints directly within the Bayesian model, i.e., by working with the constrained prior \eqref{eq:uodensity}. \\
This comparison highlights the differences between projecting an unconstrained posterior and sampling directly under stratigraphic order constraints included in the prior.
\noindent In the constrained Bayesian approach, infeasible configurations are removed before posterior inference through the ordered prior. The remaining probability mass is therefore renormalized over the admissible region, which favours chronologies exhibiting larger temporal separation between neighboring ages. Consequently, the posterior is shifted away from the boundary of the cone. This shift directly translates into inflated duration estimates.
This effect becomes particularly apparent in the marginal posterior distributions (Figure \ref{fig:marginals}).
\begin{figure}
    \centering
    \includegraphics[width = .9\linewidth]{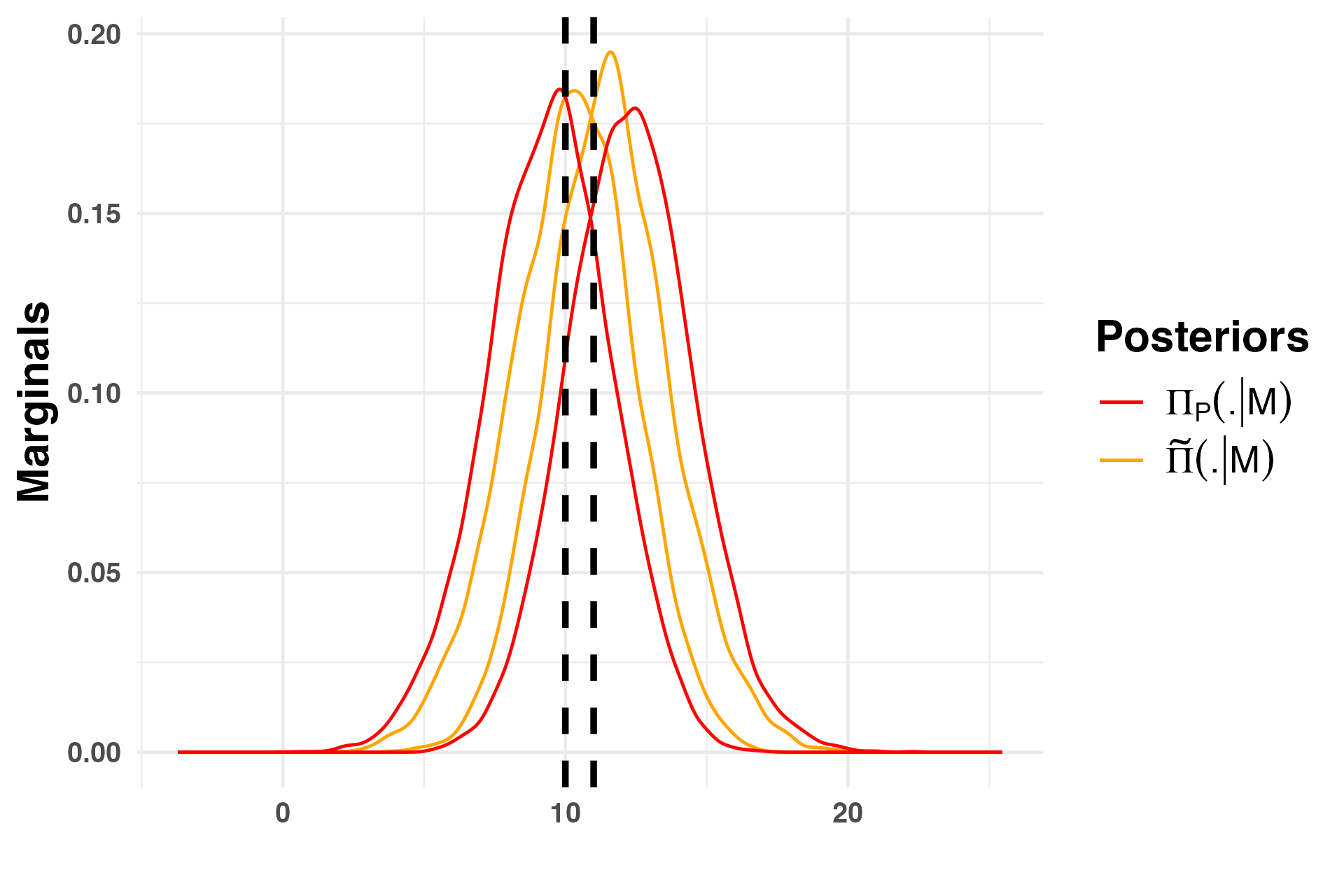}
    \caption{Posterior marginals under the projected and constrained approaches. Dashed lines correspond to the true ages $\theta_1 = 10$ and $\theta_2 = 11$.}
    \label{fig:marginals}
\end{figure}

\subsection{Computational approach}

The projected-posterior framework naturally leads to a two-stage computational procedure. First, samples are generated from the unconstrained posterior distribution using any standard MCMC algorithm. Second, each posterior draw is projected onto the isotonic cone by solving the QP \eqref{eq:QP}. Consequently, posterior inference reduces to repeatedly solving a convex optimization problem.

\noindent The computational bottleneck therefore lies in the projection step.  We instead propose the dedicated algorithm described in Algorithm~\ref{alg:projected_posterior_sampling}, which combines posterior sampling with an efficient isotonic projection solver.

\begin{algorithm}
\caption{Projected Posterior Sampling via Isotonic Regression}
\label{alg:projected_posterior_sampling}
\begin{algorithmic}[1]
\REQUIRE Unconstrained posterior $\Pi(\cdot \mid M)$; 
         constraint set $\P$ encoded as DAG $G = (V, E)$;
         weights $w_i = \mathbb{V}(\theta_i \mid M)^{-1}$ for $i \in V$;
         number of samples $p$
 
\ENSURE Samples $\{\theta^{(1)}, \ldots, \theta^{(p)}\} \sim \widetilde{\Pi}(\cdot \mid M)$
 
\STATE \textbf{Step 1: Unconstrained Sampling}
\STATE Draw $p$ samples from the unconstrained posterior:
\STATE $\quad \{\theta'\}^{(1)}, \ldots, \{\theta'\}^{(p)} \sim \Pi(\cdot \mid M)$
 
\STATE \textbf{Step 2: Constraint Projection} 
\FOR{$j = 1$ to $p$}
    \STATE Solve the weighted isotonic regression problem:
    \STATE $\quad \theta^{(j)} \gets \arg\min_{\xi \in \P} \sum_{i=1}^{n} w_i (\xi_i - \{\theta'\}^{(j)}_i)^2$
    \STATE \quad (Using Sequential Block Merging algorithm)
\ENDFOR
 
\STATE \textbf{Step 3: Output}
\RETURN $\{\theta^{(1)}, \ldots, \theta^{(p)}\}$ as samples from $\widetilde{\Pi}(\cdot \mid M)$
\end{algorithmic}
\end{algorithm}

\noindent The isotonic regression problem \eqref{eq:QP} can be solved by several methods, each offering different computational complexities and theoretical guarantees. A first approach consists of using Interior Point Methods (IPMs) through existing convex optimization solvers, such as those implemented in the CVX software. Although IPMs provide polynomial-time convergence guarantees (\cite{kyng2015fast}), their computational cost is dominated by the repeated solution of large KKT systems, making them unsuitable for large-scale problems. Alternative approaches, such as Active Set Methods (ASMs), have also been proposed (\cite{leeuw_isotone_2009}). While these methods can accommodate objective functions beyond quadratic losses, they have been shown to be computationally expensive.

The block-wise structure can instead be exploited through dedicated
algorithms specifically designed for isotonic regression; we refer to
these as structured algorithms. The classical Pool Adjacent Violators
Algorithm (PAVA; see \citealp{best_active_1990}, and references therein) is the canonical example. It computes the exact solution in linear time, $O(n)$, for total (chain) orders. However, it is not directly applicable to general partial orders.

\noindent Several structured algorithms have subsequently been proposed for the partial-order setting. Among them, \cite{burdakov_on2_2006} introduced the Generalized PAV (GPAV) algorithm. Benchmark studies demonstrated its computational efficiency, with a complexity of $O(n^2)$ and good numerical accuracy. Nevertheless, the returned solution remains an approximation of the optimal solution.

\noindent In this work, we adopt the Sequential Block Merging (SBM) algorithm proposed by \cite{wang_efficient_2022}. It guarantees a finite-step convergent algorithm for general isotonic optimization and exhibits substantially faster execution in benchmark studies. Furthermore, it naturally extends to a broader class of separable objective functions.

\section{Simulation Study: Revisiting the Steier–Rom Experiments}
\label{sec:simulations}
As mentioned earlier, \cite{steier_use_2000} conducted simulations to highlight the spread problem induced by the constrained Bayesian approach. The setup was straightforward: simulated measurements were generated from an ordered sequence of ages $\P = \lbrace \theta_1 \leq \cdots \leq \theta_n \rbrace$, where each consecutive pair is separated by a fixed interval $\Delta_t = \theta_{i+1} - \theta_i$.
    \begin{equation}
        \theta_k = 1000 + \left[ k - \frac{(n+1)}{2} \right] \Delta_t.
    \end{equation}
To obtain measures, we proceed as follows: 
\begin{equation}\label{eq:simulated_measures}
     M = \begin{bmatrix} M_1 \\ \vdots \\ M_n \end{bmatrix} = \mathbf{\varepsilon} + 
     \begin{bmatrix} \theta_1 \\ \vdots \\ \theta_n \end{bmatrix} \sim \mathcal{N}\left( \begin{bmatrix} \theta_1 \\ \vdots \\ \theta_n \end{bmatrix}, \sigma^2 I_n \right),
\end{equation}
where \[\mathbf{\varepsilon} \sim \mathcal{N}(0_n, \sigma^2 I_n). \]
In order to compare with the benchmark of \cite{steier_use_2000}, the standard deviation is fixed at $\sigma = 100$. For each simulated dataset, we compare four inference strategies:  
\begin{itemize}
\item  Reference model 
\begin{itemize}
    \item Unconstrained model:
      $\Pi(\cdot\mid M)$ obtained with a flat prior.
\end{itemize}

\item Classical Bayesian approaches
\begin{itemize}    
    \item Uniform-order model:
          $\Pi_{\mathcal P}(\cdot\mid M)$ using the classical uniform-order prior \eqref{eq:uodensity}.
    
    \item Corrected uniform-order model:
          $\Pi_{\mathcal P}^{N}(\cdot\mid M)$ using the duration correction \eqref{eq:nicholls} proposed by \citet{nicholls_radiocarbon_2001}.
\end{itemize}

\item Proposed approach
\begin{itemize}
    \item Projected-posterior model:
          $\widetilde{\Pi}(\cdot\mid M)$, corresponding to the proposed projection framework.
\end{itemize}
\end{itemize}

\subsection{Experiment A: Varying Age Gaps.}
The first experiment, originally proposed by \citet{steier_use_2000}, investigates the effect of the temporal separation between successive events. The age gap
\[
\Delta_t=\theta_{i+1}-\theta_i
\]
is progressively increased while the measurement uncertainty remains fixed. When $\Delta_t \ll \sigma$, consecutive true ages are poorly separated relative to the dating uncertainty, making inversions of the observed measurements likely. Specifically,
\[
\mathbb{P}(M_{i+1}<M_i)
=
\Phi\!\left(-\frac{\Delta_t}{\sqrt{2}\sigma}\right),
\]
where $\Phi$ denotes the standard normal cumulative distribution function. Consequently, small values of $\Delta_t$ correspond to the regime in which order constraints are expected to have the strongest influence on posterior inference.

\noindent Figure~\ref{fig:expA} displays Bayes estimate's boxplot for increasing values of $\Delta_t$. 

\noindent For small values of $\Delta_t$ ($\Delta_t = 0, \ 20$), the Corrected duration $\Pi_\P^N(\cdot \mid M)$ and projected $\widetilde{\Pi}(\cdot \mid M)$ models perform better at capturing the true ages within their HPD regions, while the Uniform-order $\Pi_\P(\cdot \mid M)$ model struggles with the age closeness. As $\Delta_t$ increases ($\Delta_t = 100$), the $\Pi_\P(\cdot \mid M)$ model gradually converges toward the true ages and provides better coverage.

\begin{figure}
    \centering
    \includegraphics[width = .9\linewidth]{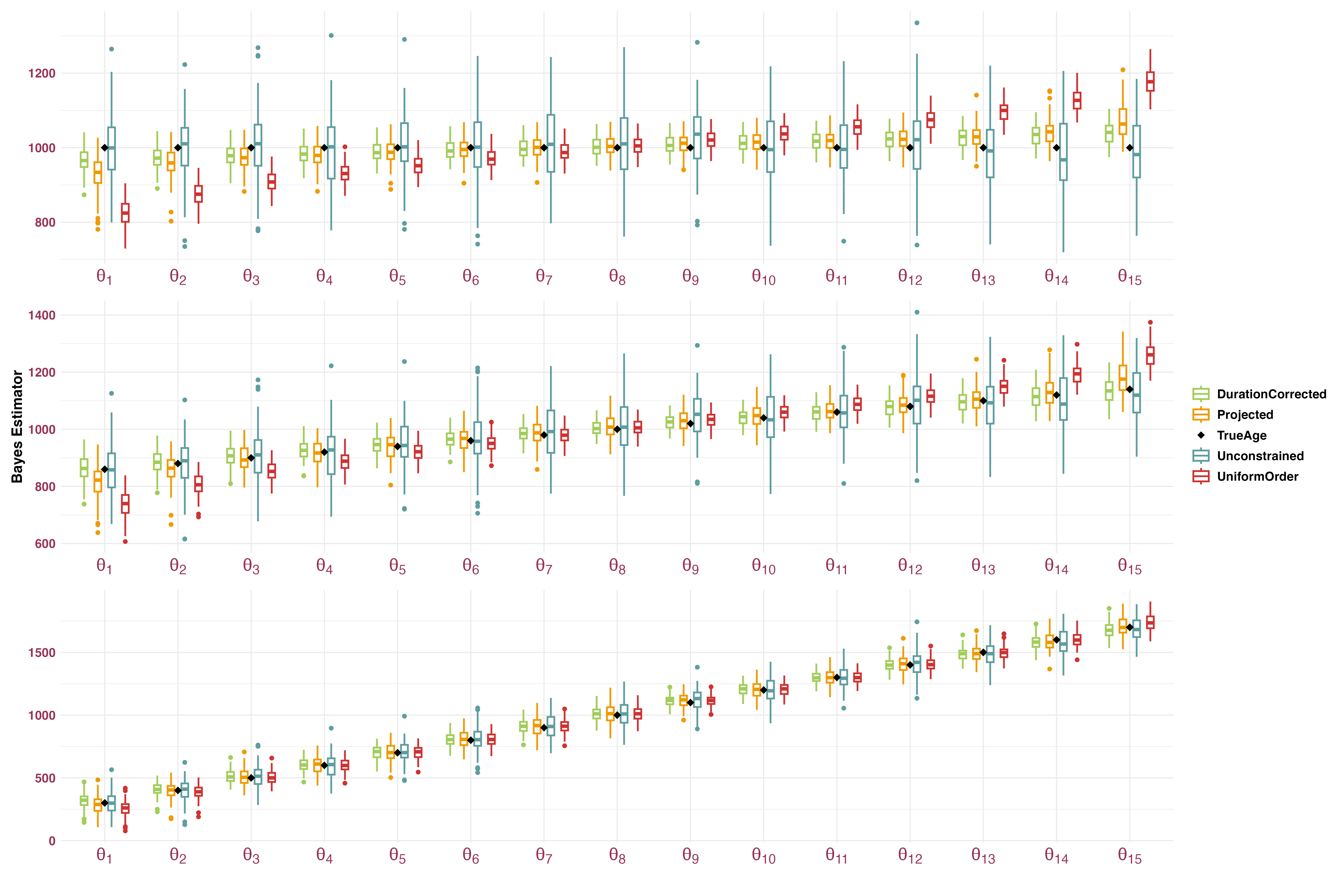}
    \caption{Boxplot of Bayes estimates obtained from $100$ replications of the simulated measures $M$ \eqref{eq:simulated_measures} when applying Experiment A with sample size $n=15$. Top: $\Delta_t = 0$. Middle: $\Delta_t = 20$. Bottom: $\Delta_t = 100$} 
    \label{fig:expA}
\end{figure}

\noindent The projected posterior $\widetilde{\Pi}(\cdot \mid M)$, exhibits a weaker performance at the boundaries of the ordered sequence, mostly for the first and last ages. This is the price paid for adopting a least-informative strategy: unlike the constrained models, the projection does not induce additional information beyond the observed measurements. Since the boundary ages are constrained on only one side, less information is available to refine their estimates. Consequently, the estimates $\theta_1$ and $\theta_{15}$ are generally less accurate than those obtained with the Duration corrected posterior $\Pi_\P^N(\cdot \mid M)$. 

\subsection{Experiment B: Increasing sample size with identical ages.}

The second experiment, also introduced by \citet{steier_use_2000}, investigates the effect of increasing the number of dated samples while all events belong to the same geochronological period. Accordingly, we consider the extreme configuration in which
\[
\Delta_t = 0,
\]
so that all true ages are identical. This setting represents the strongest possible conflict between the data-generating process and the ordered prior: although the stratigraphic order must be respected, the true chronology exhibits no temporal separation. As the number of samples increases, random measurement fluctuations produce an increasing number of apparent inversions, making this scenario particularly demanding for constrained chronological models. This corresponds to the worst-case scenario identified in Theorem~\ref{thm:posterior_simplex} and Theorem~\ref{thm:projected_posterior}, where the duration inflation induced by order constrained priors is expected to be most pronounced.

\noindent Figure ~\ref{fig:expB}  illustrates the performance of each model as sample size increases. The Uniform-order model clearly fails to capture the closeness of the ages: as $n$ increases, the posterior means spread increasingly across the age range, contradicting the knowledge that all ages are identical. In contrast, the Corrected duration and Projected models maintain tight HPD regions around the true common age, even when faced with measurement inversions. This demonstrates the efficiency of these methods for handling close ages with inverted measurements.

\begin{figure}
    \centering
    \includegraphics[width = \linewidth]{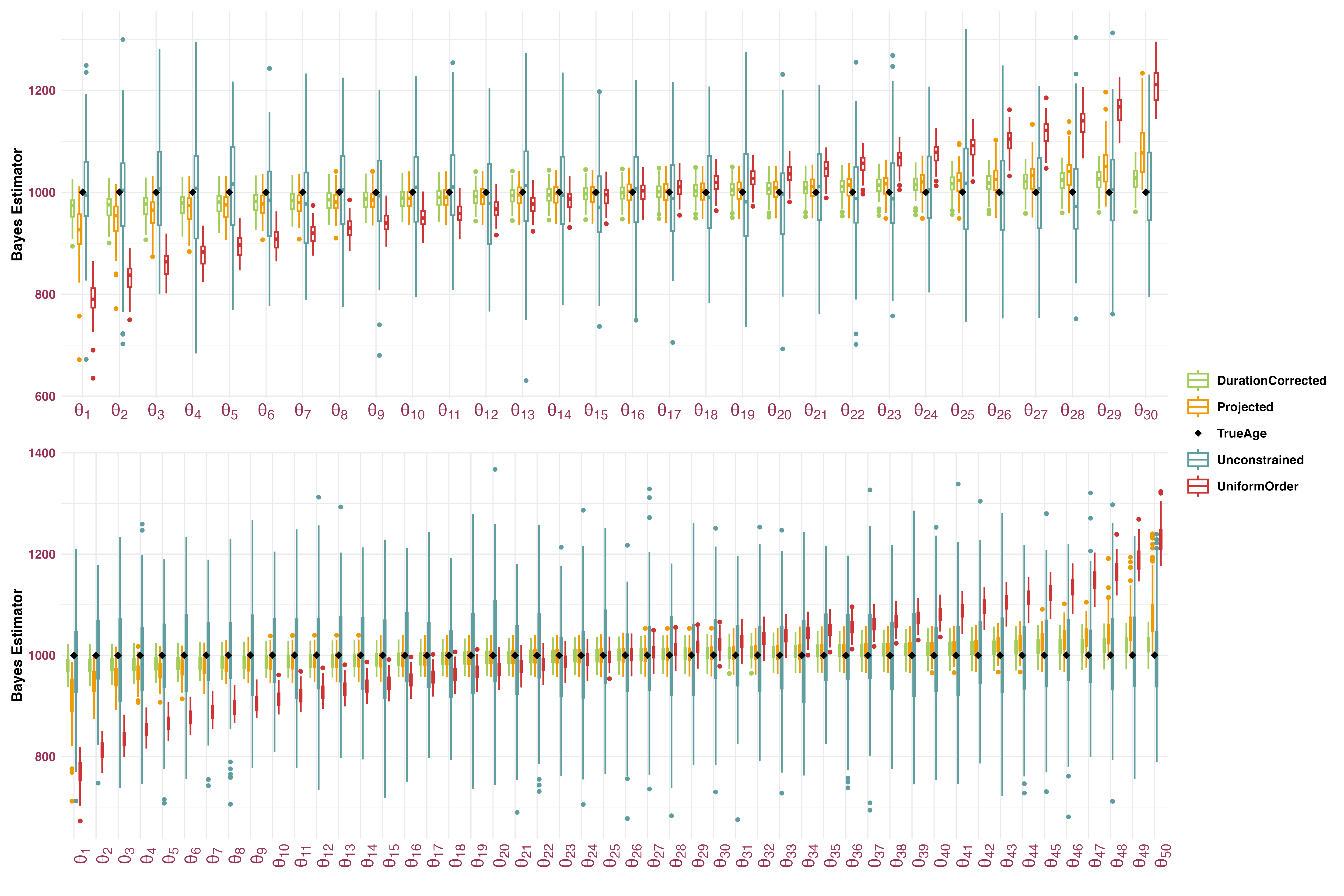}
    \caption{Boxplot of Bayes estimates obtained from $100$ replications of the simulated measures $M$ \eqref{eq:simulated_measures} when applying Experiment B with samples of size $n=30$ and $n=50$.}
    \label{fig:expB}
\end{figure}

\noindent The poor performance of the Uniform-order model becomes even more pronounced at larger sample sizes. Figure~\ref{fig:expB} shows that with $n=50$, the $\Pi_\P(\cdot \mid M)$ distribution's Bayes estimate spans nearly the entire feasible age range, providing essentially no useful information about the true ages. This systematic bias confirms the findings of \cite{steier_use_2000} regarding the failure of uniform order priors under dense sampling scenarios.

\noindent The runtimes obtained under Experiment~B for sample sizes $n=30$ and $n=50$ are reported in Figure~\ref{fig:runtime}. Computational time required by each model reveals a clear advantage for the Projected one. It should be noted that the projected posterior runtime is the required time for the QP solver (SBM) plus the unconstrained runtime. Furthermore, all constrained Bayesian models are sampled using Just Another Gibbs Sampler (JAGS), a black box not necessarily optimized for truncated distributions. 
We apply the SBM sequentially to each MCMC sample, consequently, the reported runtimes are conservative, and additional computational gains could be achieved through parallel implementation.

\begin{figure}
    \centering 
    \includegraphics[width = .9\linewidth]{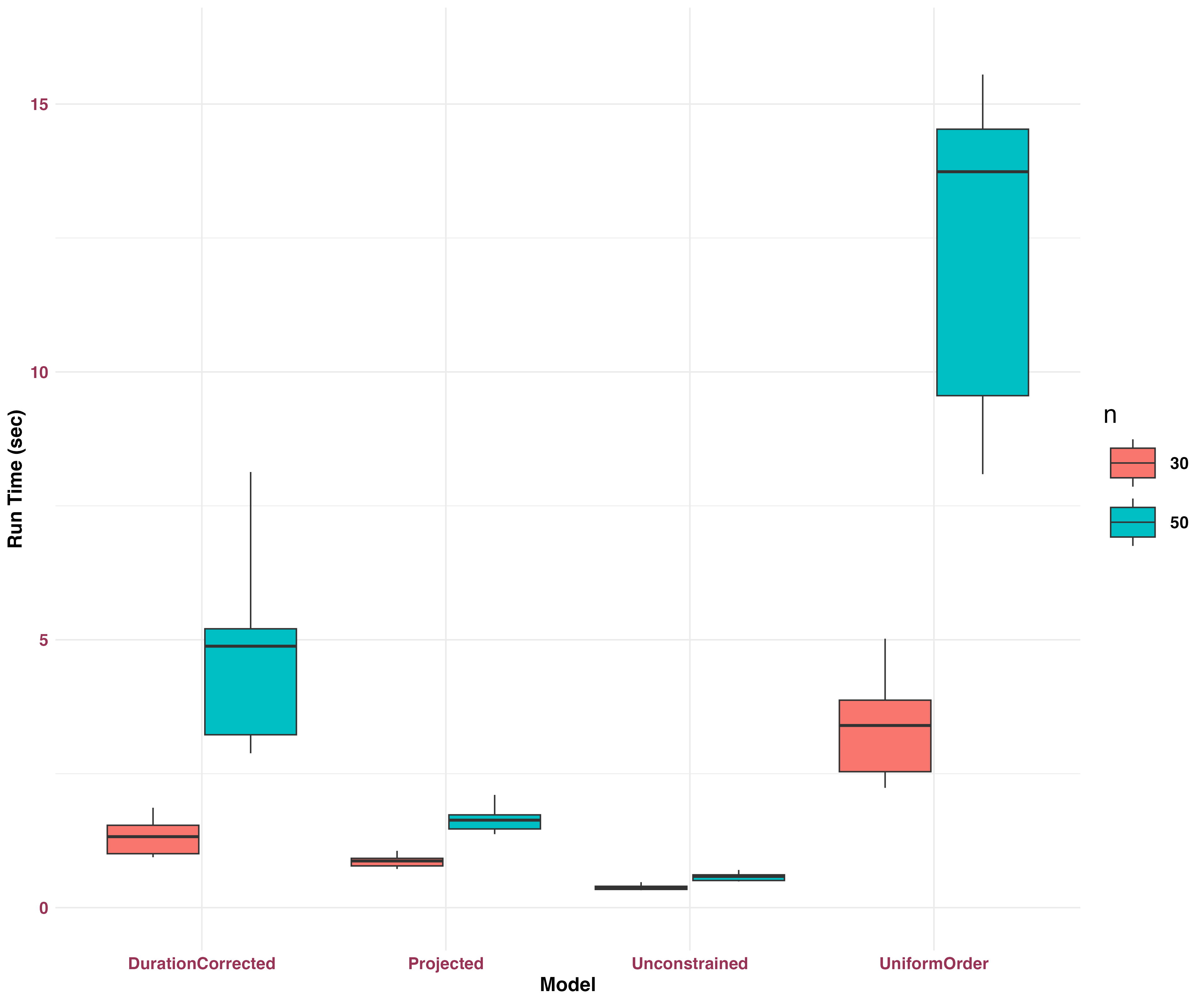}
    \caption{Computational time (seconds) over 100 simulation replicates for the four competing models under Experiment~B with sample sizes $n=30$ (left) and $n=50$ (right). Runtimes were measured using Sys.time. All computations were performed in R 4.5.0 on a MacBook Air (Mac15,12) equipped with an Apple M3 chip (8 cores) and 16 GB of unified memory, running macOS 14.6.}
    \label{fig:runtime}
\end{figure}

\subsection{Experiment C: Coverage and Accuracy Assessment} 
In this experiment, only a pair of ages $(\theta_1, \theta_2)$ is considered and multiple measurements $M^{(j)} = (M_1^{(j)}, M_2^{(j)})$ are generated for all $j \in \{1, \ldots, p\}$. For each case, in \cite{steier_use_2000}, it is examined whether the true ages fall within the 95\% HPD regions obtained from the Bayesian models. Specifically, for each posterior distribution's marginal of each model, the lower and upper bounds $(l_M, U_M)$ are computed such that
\[\mathbb{P}(\theta_i\in (l_M, U_M) \mid M) = 0.95.\]
Four quantities are then evaluated for each age $\theta_i$, where $i \in \{1,2\}$:
\begin{align*}
    \text{miss\_rate}_i &= \frac{1}{p} \sum_{j=1}^{p} (1-\1_{\theta_i^{\text{true}} \in (l_M^{(j)}, U_M^{(j)} ) }) \\
    \text{mean\_absolute\_bias}_i &= \frac{1}{p} \sum_{j=1}^{p} \mid \widehat{\mathbb{E}}(\theta_i \mid M_i^{(j)}) - \theta_i^{\text{true}} \mid \\
    \text{MSE}_i &= \frac{1}{p} \sum_{j=1}^{p} \bigg\lbrack \theta_i^{\text{true}} -\widehat{\mathbb{E}}(\theta_i \mid M_i^{(j)})\bigg\rbrack^2 \\
    \text{mean\_variance}_i &= \frac{1}{p} \sum_{j=1}^{p} \widehat{Var}(\theta_i \mid M_i^{(j)}), 
\end{align*}
where $\widehat{\mathbb{E}}$ and $\widehat{Var}$ are the Monte Carlo estimates of the expectation and variance under the posterior distribution.\\  
\noindent The miss rate quantifies coverage (expected 5\% for nominal 95\% HPD regions when Bayesian and frequentist cases are similar), while the average absolute bias, variance, and MSE assess point estimation accuracy. The goal is to determine whether models that provide great precision (narrower intervals) maintain proper coverage and accuracy.
We repeat this analysis across several values of the age gap $\Delta_t$ to examine performance as a function of age separation. Figures~\ref{fig:ExCA1} and \ref{fig:ExCA2} present the results for the youngest age $\theta_1$ and oldest age $\theta_2$, respectively.

 \begin{figure}
        \centering
        \includegraphics[width = \linewidth]{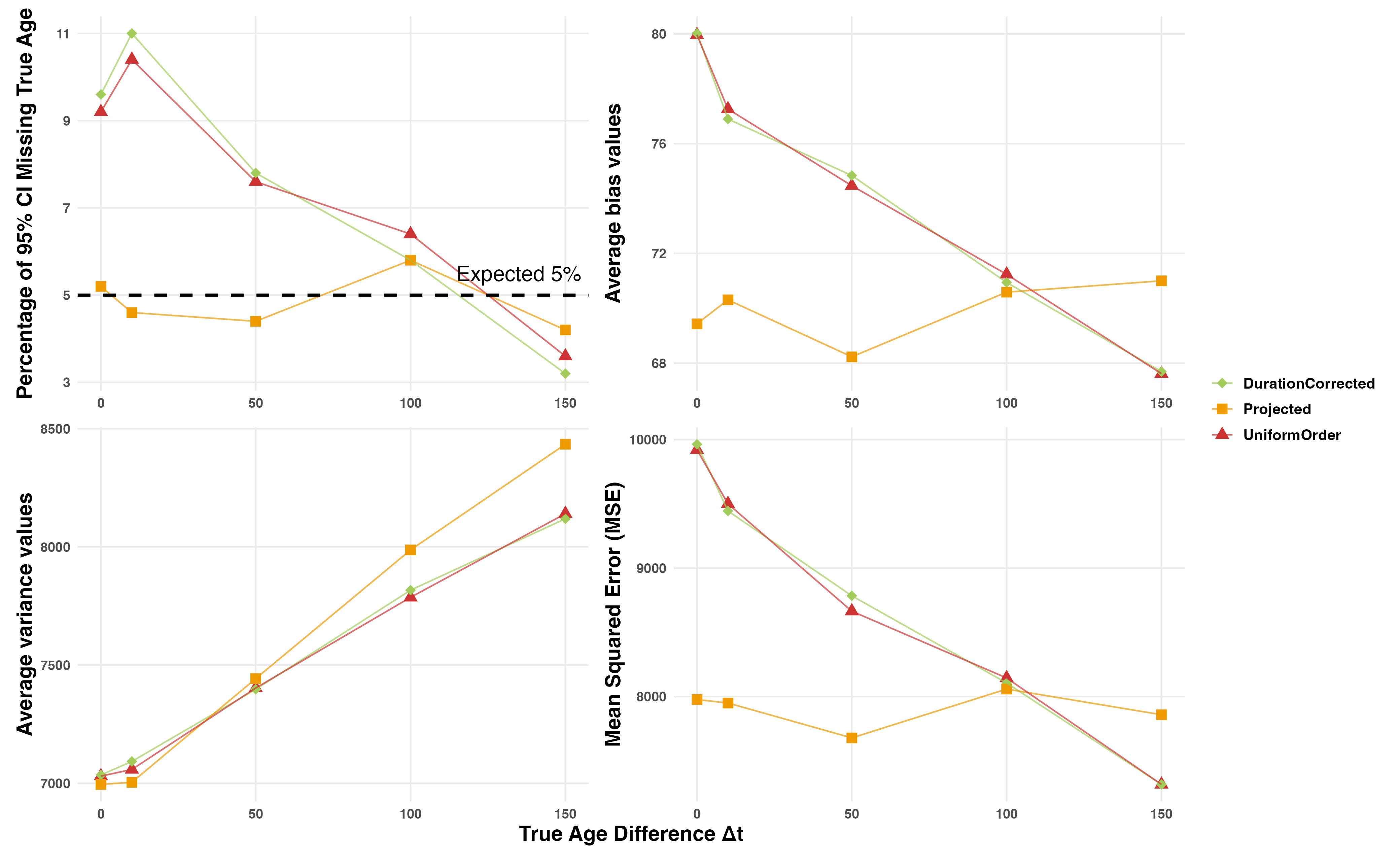}
        \caption{Experiment C results for the youngest age $\theta_1$ based on $p=500$ sampled measurement pairs. 
        The four panels display: mean bias (top right), mean variance (top left), percentage of predictions missing the true value (bottom left), and mean squared error (bottom right).}
        \label{fig:ExCA1}
    \end{figure}

 \begin{figure}        
    \centering
        \includegraphics[width = .95\linewidth]{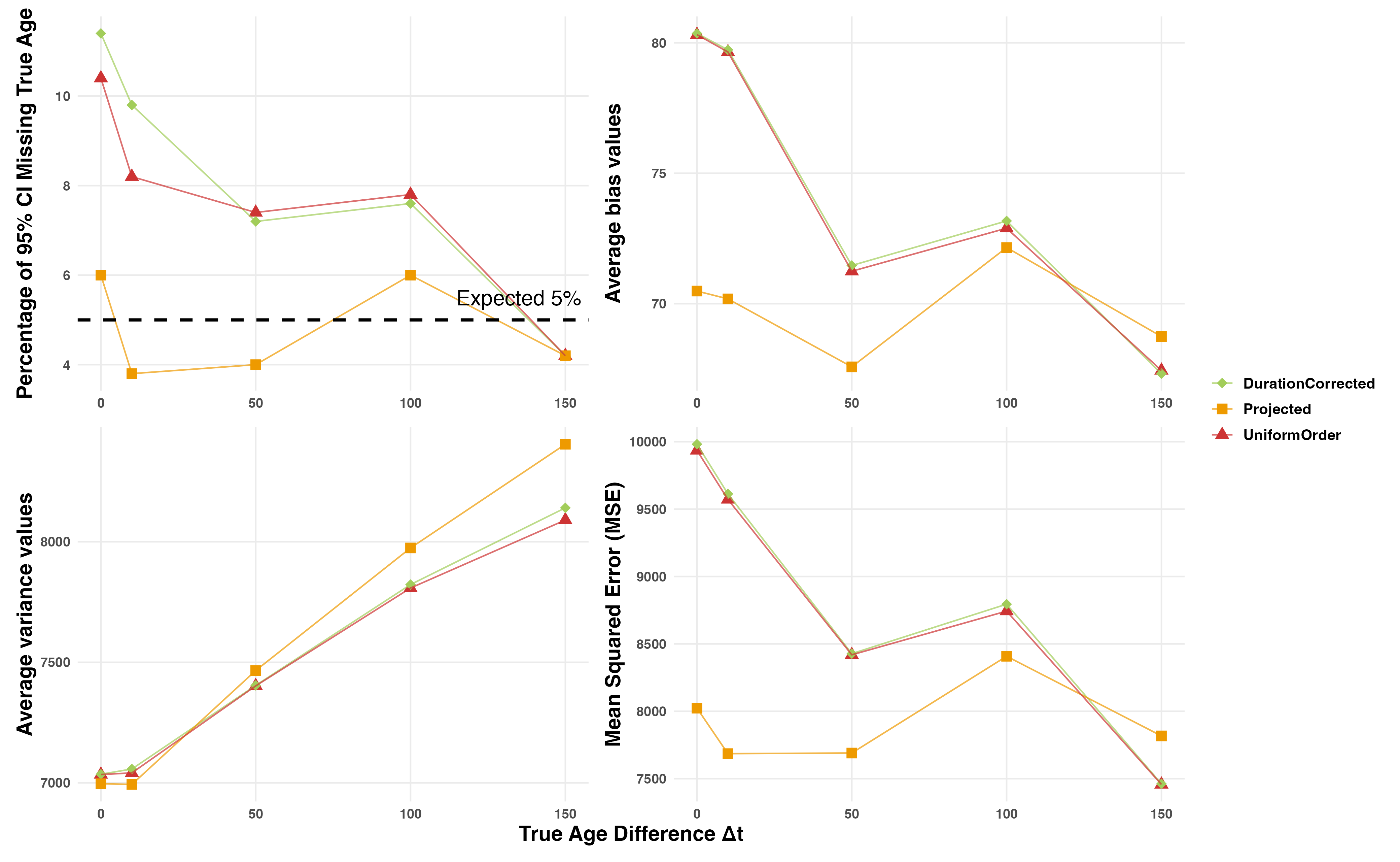}
        \caption{Experiment C results for the oldest age $\theta_2$ based on $p=500$ sampled measurement pairs. 
        The four panels display: mean bias (top right), mean variance (top left), percentage of predictions missing the true value (bottom left), and mean squared error (bottom right).}
        \label{fig:ExCA2}
    \end{figure}
\noindent The results reveal distinct performance regimes. When $\Delta_t \ll \sigma$, the Projected model demonstrates superior performance for both ages. Most notably, its miss rate remains closest to the nominal 5\% level, indicating proper coverage, while simultaneously achieving lower bias and MSE compared to competing methods. This suggests that the posterior $\widetilde{\Pi}(\cdot \mid M)$ approach successfully balances precision with accuracy in the challenging regime of closely spaced ages. These findings corroborate the intuition gained from the two-dimensional example (Figure~\ref{fig:marginals}): when neighboring ages are poorly separated relative to the dating uncertainty, posterior projection preserves the measurement evidence more faithfully than the ordered-prior approach.
\noindent However, when $\Delta_t \geq \sigma$, the Projected model's advantage diminishes. Its performance becomes comparable to other methods, with slightly higher variance suggesting some degree of overspreading in the posterior predictions. This is expected: when ages are well-separated relative to measurement error, the constraint information becomes less informative, and the projection step adds marginal uncertainty without substantially improving point estimates.

\section{Applications to archaeological case studies}
\label{sec:applications}
\subsection{Neolithic urbanization in Anatolia}

The Neolithic is a key period in human evolution, as it saw major changes in lifestyle. Among these, one of the most spectacular is the appearance of settled life, in conjunction with the onset of agriculture and animal husbandry. Therefore, understanding the timing and rate of appearance of these practices is of great importance. In this context, and outside the Fertile Crescent, two sites located in central Anatolia play a major role: Boncuklu \citep{baird2012boncuklu} was first inhabited by sedentary groups who complemented foraging practices by consuming low amounts of cereals and legumes; later, larger groups - whose material culture seems to have evolved from that found in Boncuklu - settled in Çatalhöyük East \citep{hodder2014ccatalhoyuk}, where agricultural practices are much more developed and where houses are not only more numerous but also found closer to each other compared to Boncuklu \citep{asouti2013contextual}. The central chronological question is to determine when permanent occupation began at the site, and thus how much time elapsed between the occupations of Boncuklu and Çatalhöyük East.

In Boncuklu, based on 9 radiocarbon ages, the occupations ended between 7950 and 7710 BC (although it should be noted that the youngest dated sample has an age between 7950 and 7760 BC; both intervals given at the 95$\%$ credible level, suggesting that the end of occupations as determined by OxCal implies final occupations after the youngest dated sample; in other words, OxCal extends occupations outside of the measurement range). This section presents a re-analysis of the much larger radiocarbon dataset from  Çatalhöyük East, originally analysed in \cite{bayliss_getting_2015}. In the latter publication, a model built with OxCal dated the onset of occupations from 'close to 7100 BC, at least 200 years later than previous estimates. However, in a recent article \cite{guerin_conflict_2026} showed that (i) the onset of occupations as estimated with OxCal is younger than the data suggest; and (ii) no model based on the uniform order prior is able to provide estimates that are consistent with the data.
  
\noindent The Çatalhöyük East dataset comprises 40 radiocarbon ages obtained from 40 samples, together with a stratigraphic sequence that provides a partial order DAG (see Figure \ref{fig:DagCatalhoyuk}). The individual calibration of radiocarbon measurements has the following unconstrained posterior density: 
\begin{equation*}
    \pi(\theta\mid M) =  \prod_{i=1}^{n} \pi(\theta_i\mid M_i) \propto \prod_{i=1}^{n} \frac{\exp\bigg\lbrack \frac{-(g(\theta_i)- M_i)^2}{2(\sigma^2 + \sigma_{g}^2(\theta_i))}\bigg\rbrack}{\sqrt{\sigma^2 + \sigma_{g}^2(\theta_i)}} ,
\end{equation*}
where $g, \sigma_g$ are the calibration IntCal20 curve and errors (see \cite{reimer_intcal20_2020}), and $\sigma$, the measurement errors.  \\
This constitutes the unconstrained model and is projected into the constrained space $\P$, induced by the DAG in Figure~\ref{fig:DagCatalhoyuk}.
\noindent In \cite{bayliss_getting_2015}, the chronological analysis was performed using the OxCal framework \citep{bronk_ramsey_bayesian_2009}. Figure~\ref{fig:Catalhoyuk_comparaison} compares the resulting chronology with those obtained from the Unconstrained and the Projected models.

\noindent The Projected and OxCal models lead to noticeably different posterior intervals. In particular, the OxCal credible intervals exhibit a marked concentration effect. This behavior arises from the phase model incorporated in OxCal, where stratigraphic information is introduced directly into the Bayesian model through additional parameters, corresponding to phase boundaries. Consequently, with this prior information on phase parameters,  the posterior distribution becomes more concentrated than under the prior \eqref{eq:uodensity} (see \cite{lanos_event_2018} for a detailed discussion).
A striking example is provided by sample OxA-9776. Under the OxCal posterior, its credible interval is substantially shifted and shows little agreement with the corresponding interval obtained from the Unconstrained model. As for the Projected values, they are  consistent with the calibrated ages, showing good agreement between the isotonic estimates, the stratigraphic order, and previous expectations for the onset of occupation at Çatalhöyük East. 

   \begin{figure}
    \centering
    \includegraphics[width = \linewidth]{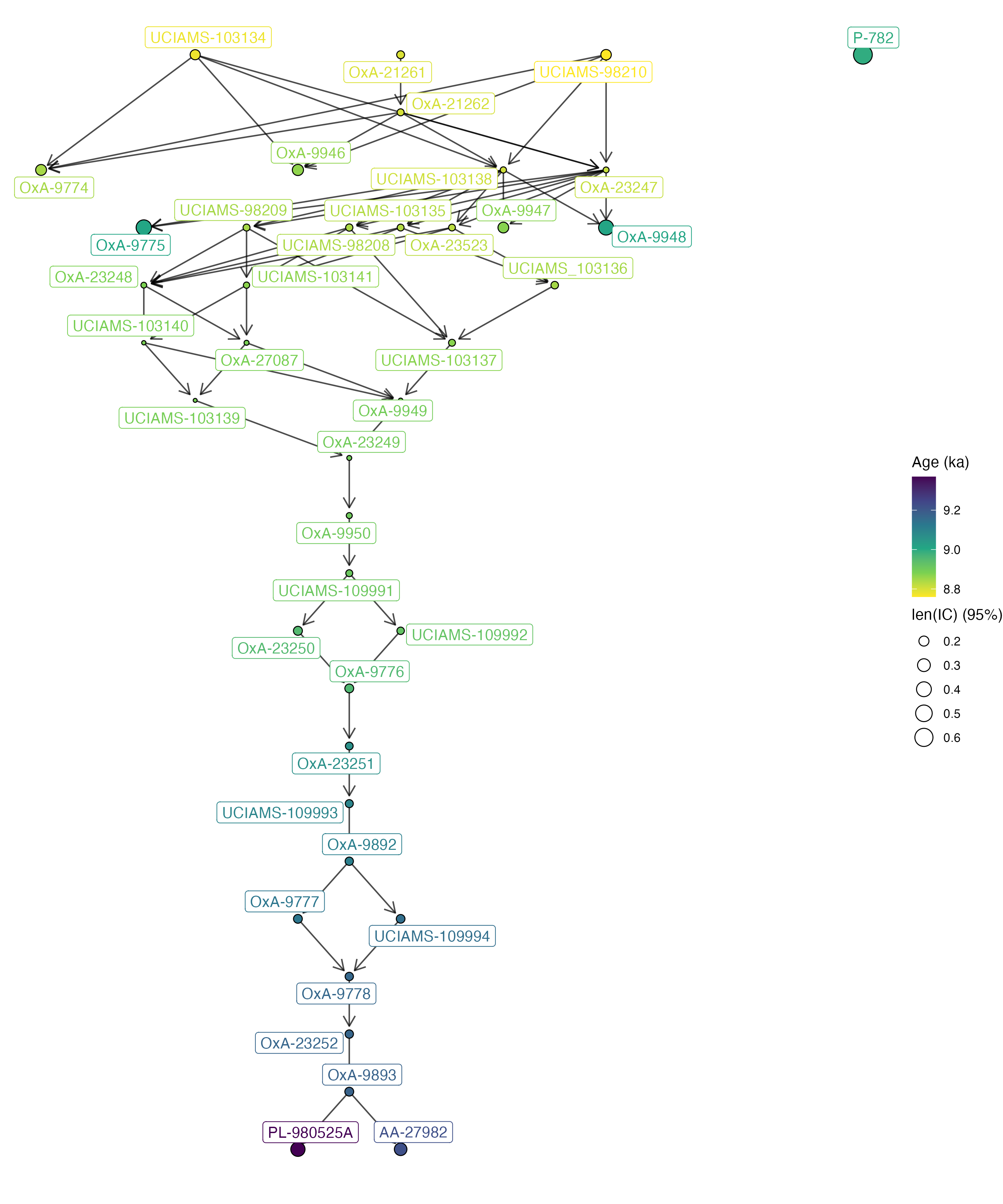}
    \caption{Directed acyclic graph (DAG) of the stratigraphic constraints for the Çatalhöyük dataset. Node colors represent the unconstrained posterior mean ages, and node sizes are proportional to the longer of the corresponding 95\% credible intervals.}
    \label{fig:DagCatalhoyuk}
   \end{figure}

   \begin{figure}
    \centering
    \includegraphics[width = \linewidth]{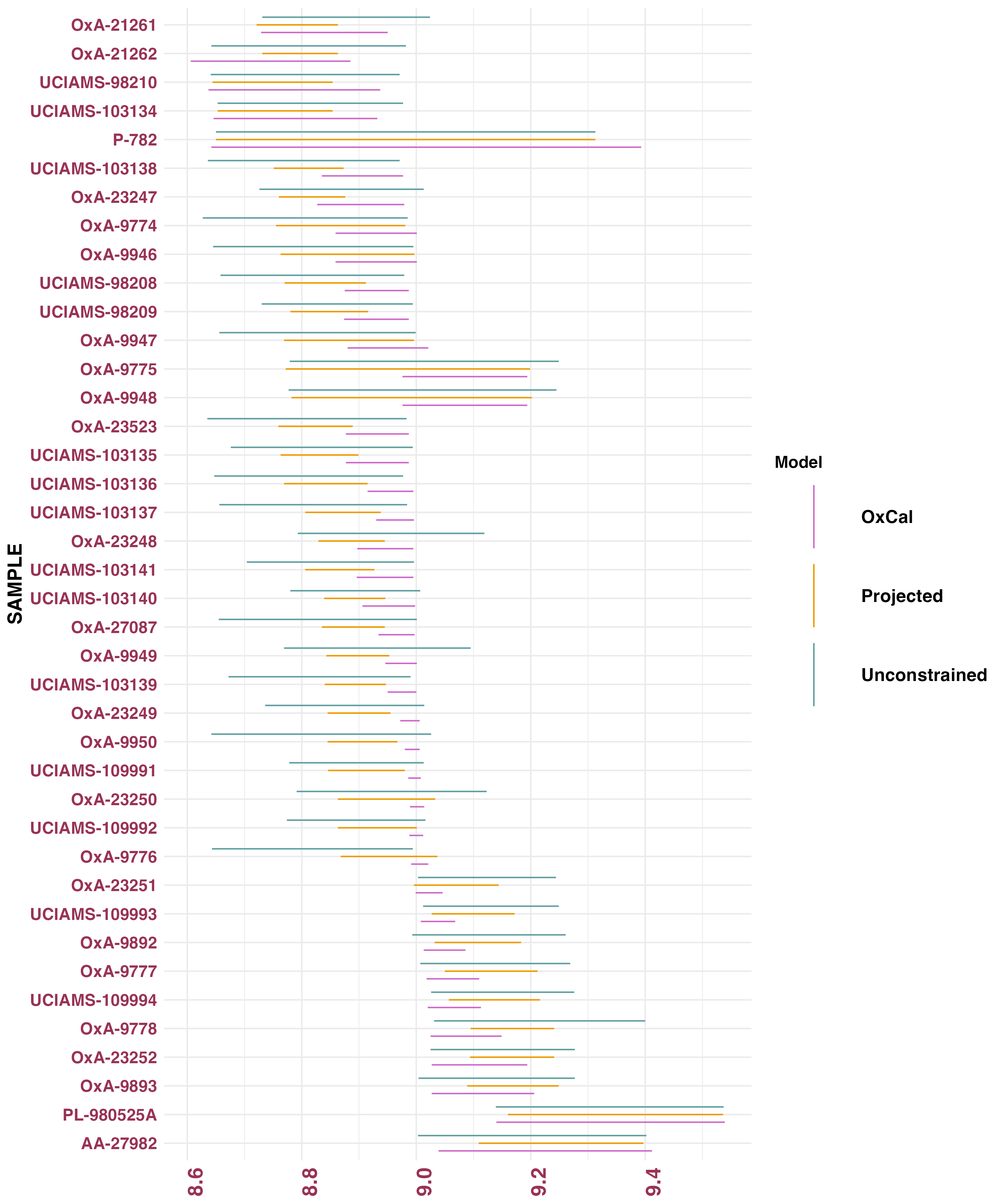}
    \caption{Comparison of the 95\% credible intervals obtained with the OxCal, Projected, and Unconstrained models.}
    \label{fig:Catalhoyuk_comparaison}
   \end{figure}

\subsection{The Middle Palaeolithic of Gatzarria in its climatic context}

Located in Ossas-Suhare in western Pyrenees (Pyrénées-Atlantiques, south west France), Gatzarria Cave has yielded numerous lithic artifacts and faunal remains attributed to the Middle and Upper Palaeolithic. While the Upper Palaeolithic occupations can be rather precisely dated using radiocarbon, the range of this method (the last 50 millennia, approximately) is too limited to date the Middle Palaeolithic. Unfortunately, OSL dating is rather imprecise and the data obtained for the Middle Paleolithic layers in Gatzarria are very noisy (bottom of Figure~\ref{fig:GatzarriaSummary} -  \citep{DeschampsInPrep}, \citep{Guérin2024datation}): the measured ages scatter between ~35 and ~85 ka and the standard deviations of the probability densities correspond to, on average, $9\%$ of the measured ages. Yet, obtaining a reliable and precise chronology is important for at least two reasons: (i) compare the site with others that yielded similar lithic industries (that is the backbone of the ongoing ERC project Quina World) and (ii) place the occupations of Gatzarria in their environmental context. Indeed, the period between 85 and 35 ka is characterized by major important, worldwide temperature fluctuations (Middle of Figure~\ref{fig:GatzarriaSummary}). In this context, improving the statistical inference that can be drawn from OSL measurements is crucial to discuss co-evolutions of human behavior and peopling dynamics on the one hand, and of climate and environments on the other.
\begin{figure}
    \includegraphics[width = \linewidth]{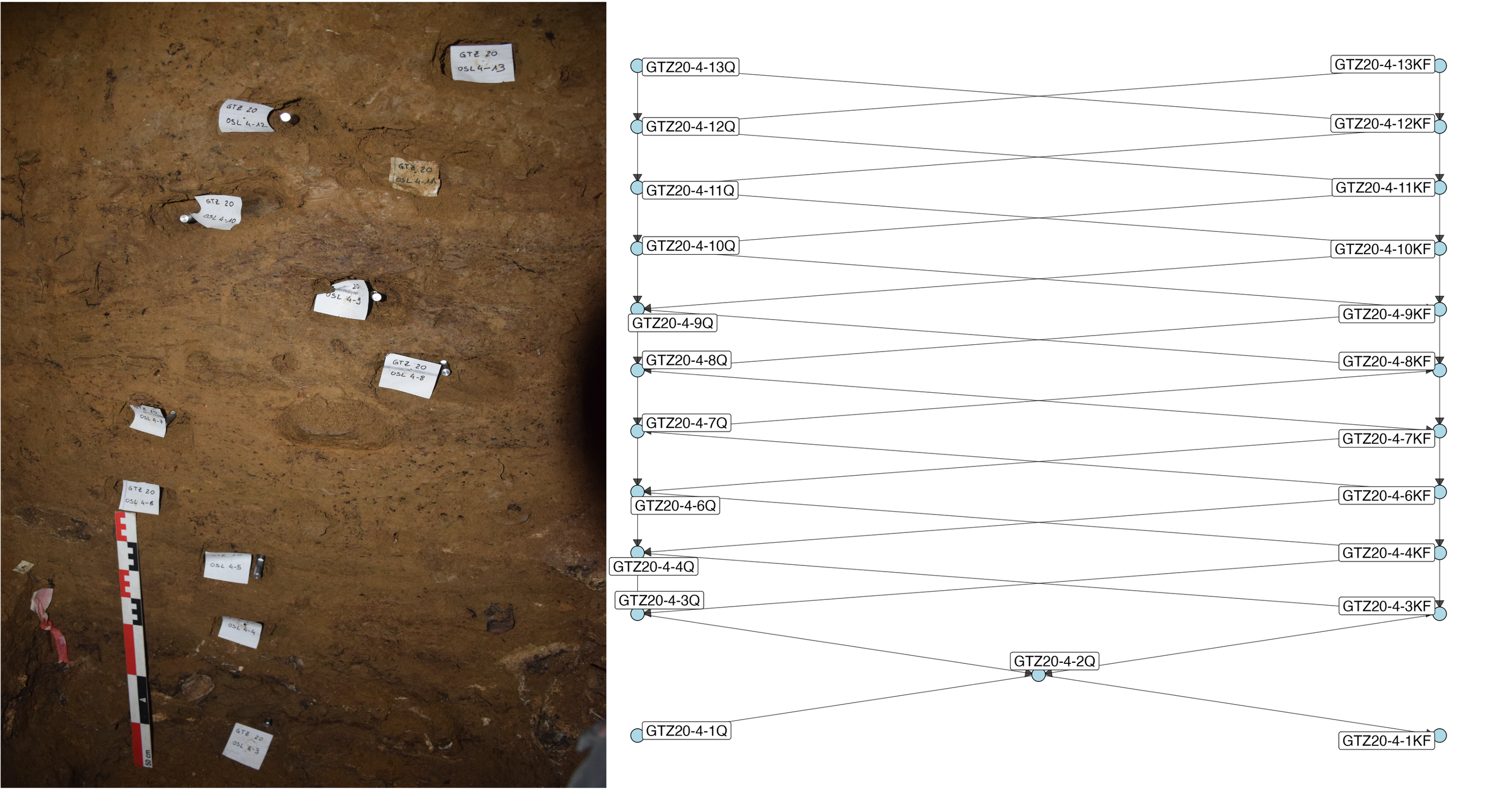}
    \caption{Left: Gatzarria excavation site's picture. Right: Stratigraphic Structure obtained during excavation.}
    \label{fig:DAGGatzarria}
\end{figure}

Application of the isotonic projection is made possible by the ordering constraints shown in Figure~\ref{fig:DAGGatzarria}: all samples were taken from the same stratigraphic column, and each sample (except GTZ20-4) was measured using two methods: OSL of quartz and its equivalent IRSL (for Infra-Red Stimulated Luminescence) of K-feldspar (for measurement procedures and details, the reader is referred to \citep{DeschampsInPrep} and  \citep{Guérin2024datation}).
Figure~\ref{fig:GatzarriaSummary} shows the comparison, for each age, between the measured credible intervals (bottom) and the projected ones through isotonic regression (top). Two observations are most striking: 1. Most of the ages become tightly clustered around 58-65 ka; 2. the length of the credible intervals is greatly reduced. Indeed, the age uncertainties (expressed as the ratio of a quarter of the $95\%$ interval to the age estimate, which is taken as a surrogate for the standard deviation) amount to $3\%$, on average. 
\begin{figure}
    \centering
    \includegraphics[width = .9\linewidth]{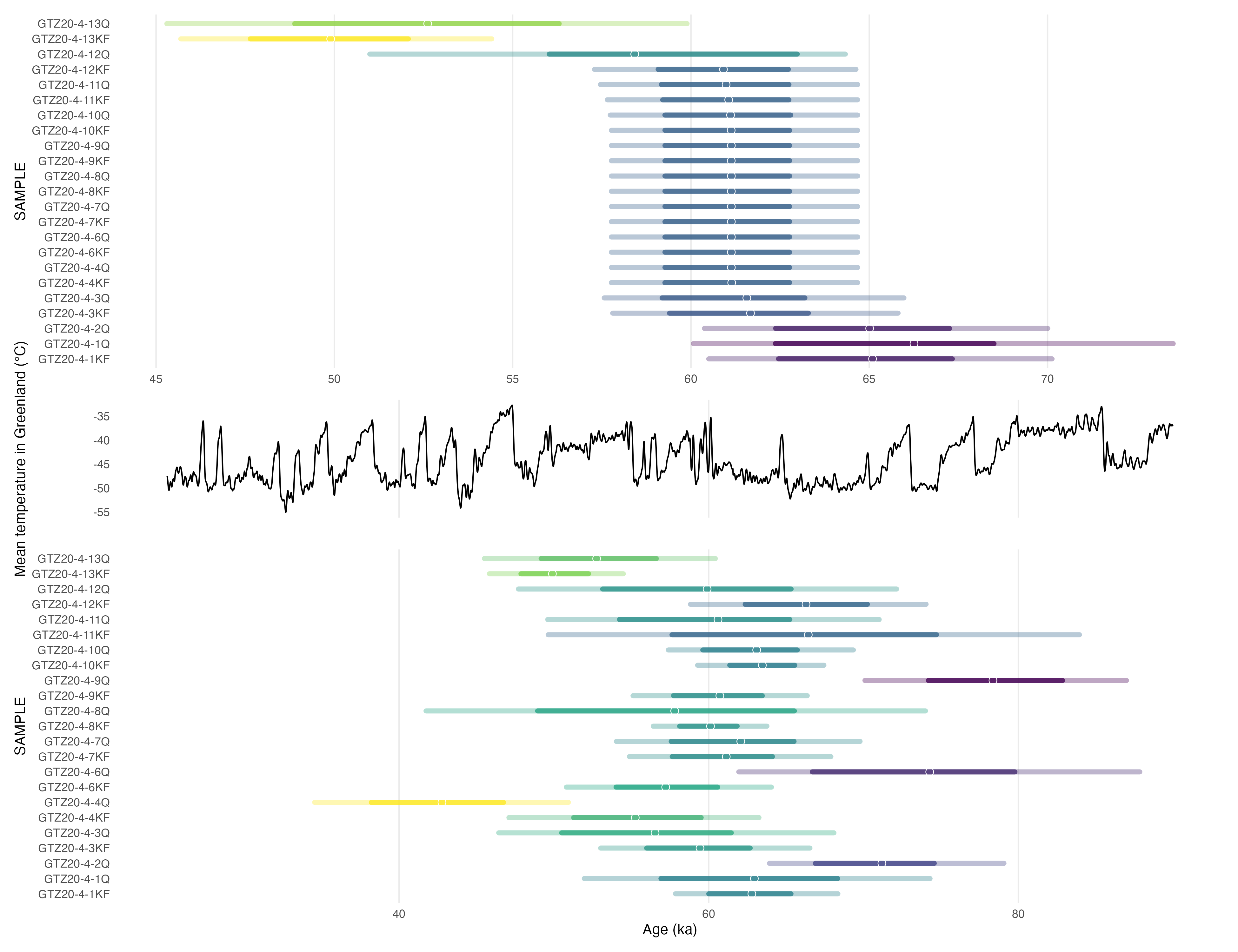}
    \caption{Bayesian modelling for measured luminescence ages at the archaeological site of Gatzarria (SW France). Top: HPD regions at 68\% (opaque) 95\% (transparent) of the Projected model. Middle: Global climatic fluctuations between 90 and 30 ka, as given by the mean annual temperature in Greenland \cite{kindler2014temperature}. Bottom: HPD regions at 68\% (opaque) 95\% (transparent) of the Unconstrained model}
    \label{fig:GatzarriaSummary}
\end{figure}

\noindent Such a level of precision is unprecedented in OSL dating: in a recent review, \cite{murray_optically_2021} acknowledged that the best achievable level precision using OSL is $5-10\%$. Most importantly, the middle part of the sequence (between samples GTZ20-4-4 to 4-9), which has yielded Quina Mousterian \citep{Bourguignon1997Le},\citep{DeschampsInPrep}, is dated to 58-65 ka at the $95\%$ credibility level (59-63 ka at $68\%$). Thus, these time intervals most likely point to a major climatic degradation known as Heinrich Stadial 6 (HS6) that lasted from 64 to 59 ka (\cite{rasmussen2014stratigraphic}), and during which icebergs discharged loads of sediment almost at the latitude of Gatzarria, in the bay of Biscay.

\noindent Much has already been written on the presumed chronology of Quina Mousterian in South West France; some scholars attributed it to Marine Isotope Stage 4 (74-59 ka) based on the abundance of reindeer remains found with Quina-type tools (see, e.g., \cite{mellars1970chronology}, \cite{mellars1991comparison}, \cite{jaubert2009archeosequences}, \cite{discamps2011human}). More recently, based on a regional study comparing micro-mammals and herbivore remains found in Quina contexts, \cite{discamps2017reconstructing} suggested HS6 as the most likely period for the Quina Mousterian; yet, no numerical dating study was able to validate (or invalidate) this hypothesis, simply because the chronological resolution achievable with numerical dating methods was not good enough. Here, for the first time, applying isotonic regression to a set of luminescence ages allows discussing the association between Quina Mousterian and HS6. The availability of this Bayesian model will certainly change chronological interpretations of the Middle Palaeolithic record, by making correlations between human evolution and short yet intense climatic events possible.  

\section{Conclusion}
\label{sec:conclusion}
Classical Bayesian chronological models incorporate the ordering structure directly into the prior. However, such priors are far from neutral. For instance, the Uniform-order prior induces a highly informative distribution on the total duration $d_n = \theta_n - \theta_1,$ systematically favoring larger durations and thereby producing the well-known boundary biases. 

\noindent We adapt a Bayesian framework for parameter estimation under partial order constraints based on the projection of an unconstrained posterior onto an isotonic cone. Rather than specifying an informative prior on the ordering structure as is the case in classical chronological Bayesian models, our approach enforces stratigraphic constraints through a Wasserstein $W_2$ projection. This ensures that the resulting posterior remains as close as possible to the unconstrained distribution. 

\noindent This property is particularly appealing in applications such as archaeology or geochronology, where researchers wish to incorporate stratigraphic information without introducing artificial duration inflation or implicit assumptions about the distribution of ages. The projection framework preserves the data-driven variability of the original model while enforcing monotonicity in the least informative manner possible.

\noindent Nevertheless, the quality of the projected posterior ultimately depends on the quality of the unconstrained posterior. Poor estimates can be partially mitigated through the weighted projection. However, large inversions that strongly contradict the stratigraphic ordering cannot be fully corrected. This limitation suggests investigating more robust objective functions, such as using the $\ell_1$ norm. However, a fundamental property of the operator $T_{\mathcal P}$, is the uniqueness of the solution, which is not always satisfied under the $\ell_1$ loss. 

\noindent The proposed framework is expected to be particularly valuable for high-resolution chronological studies, where a large number of samples are collected over relatively short time periods and the duration-inflation effects of constrained Bayesian models are most pronounced.

\clearpage
\appendix
\section{Proof of Proposition \ref{prop:block_solution} }
\label{proof:block_solution}
\begin{proof}
Let $(\theta,\lambda)$ satisfy the KKT conditions associated with the quadratic program \eqref{eq:QP}. The proof proceeds in two steps. We first see that complementary slackness induces a block structure on the solution. We then sum the stationarity equations over each block to derive the projected value.

\noindent For every vertex $i\in V$, define the sets of outgoing and incoming edges by
\begin{align*}
E_i^+ &= \left\{t\in\{1,\ldots,m\}\mid e_t=(i,j),\ j\in V\right\},\\
E_i^- &= \left\{t\in\{1,\ldots,m\}\mid e_t=(j,i),\ j\in V\right\}.
\end{align*}
Let $C \in \R^{m \times n}$ denotes the reduced constraint matrix, constructed using the canonical basis vectors $\lbrace u_1, \cdots, u_n \rbrace $ of $\R^n$:

\begin{equation}
    \label{eq:edge_matrix}
    \begin{bmatrix}
    (u_{e_1(1)} - u_{e_1(2)})^T \\
    \vdots \\
    (u_{e_k(1)} - u_{e_k(2)})^T \\
    \vdots \\
    (u_{e_m(1)} - u_{e_m(2)})^T \\
\end{bmatrix}
= 
\begin{bmatrix}
    c_1^T \\
    \vdots \\
    c_k^T \\
    \vdots \\
    c_m^T \\
\end{bmatrix}, 
\end{equation}
\noindent the complementary slackness condition reads
\[
\lambda^\top C\theta =0,
\]
which is equivalent to
\[
\lambda_t(c_t^\top\theta)=0,
\qquad t=1,\ldots,m.
\]
Since $c_t^\top\theta=\theta_i-\theta_j$ for the edge $e_t=(i,j)$, we obtain
\[
\lambda_t(\theta_i-\theta_j)=0.
\]
Hence, for every edge $(i,j)\in E$, either
\[
\theta_i=\theta_j,
\]
or
\[
\lambda_t=0.
\]
Since every feasible solution satisfies $\theta_i\leq\theta_j$, each constraint is therefore either active ($\theta_i=\theta_j$) or inactive ($\theta_i<\theta_j$).

\noindent Consequently, the active constraints partition the graph into connected blocks
\[
B_1,\ldots,B_K,
\]
on each of which the projected solution is constant. Denoting by $\theta_{B_o}$ the common value on block $B_o$, we have
\[
\theta_i=\theta_{B_o},
\qquad
\forall i\in B_o.
\]

\noindent We now consider the stationarity conditions. For every vertex $i\in B_o$,
\begin{align*}
2w_i(\theta_i-\theta'_i)
+\sum_{t\in E_i^+}\lambda_t
-\sum_{s\in E_i^-}\lambda_s
=0,
\end{align*}
which becomes
\begin{align*}
2w_i(\theta_{B_o}- \theta_i')
&+\sum_{t\in E_i^+(B_o)}\lambda_t
-\sum_{s\in E_i^-(B_o)}\lambda_s \\
&\qquad
+\left[
\sum_{t\in E_i^+\setminus E_i^+(B_o)}\lambda_t
-
\sum_{s\in E_i^-\setminus E_i^-(B_o)}\lambda_s
\right]
=0.
\end{align*}

The last bracket only involves edges connecting $B_o$ to another block. Such constraints are inactive; otherwise the neighboring vertex would also belong to $B_o$. Therefore, complementary slackness implies
\[
\lambda_t=0,
\qquad
\forall t\in
\bigl(E_i^+\setminus E_i^+(B_o)\bigr)
\cup
\bigl(E_i^-\setminus E_i^-(B_o)\bigr).
\]

Summing the stationarity equations over all vertices of $B_o$ yields
\[
\sum_{i\in B_o}
2w_i(\theta_{B_o}- \theta_i')
+
\sum_{i\in B_o}
\left(
\sum_{t\in E_i^+(B_o)}\lambda_t
-
\sum_{s\in E_i^-(B_o)}\lambda_s
\right)
=0.
\]

\noindent The dual terms cancel pairwise since every internal edge contributes once with a positive sign and once with a negative sign. Hence,
\[
\sum_{i\in B_o}
w_i(\theta_{B_o}- \theta_i')
=0,
\]
which immediately gives
\[
\theta_{B_o}
=
\frac{\sum_{i\in B_o}w_i\theta_i'}
{\sum_{i\in B_o}w_i}.
\]
This proves the result.
\end{proof}
\section{Proof of Proposition \ref{prop:projected_sampling}}
 \label{proof:projected_sampling}
\begin{proof}
    Since $ \mathcal P$ is the finite intersection of closed subsets of $\mathbb R^n$, it is a closed subset of the Polish space $\mathbb R^n$. Hence, $\mathcal P$ is itself a Polish space.

       Note that the following conditions are satisfied: \begin{itemize}
        \item The unconstrained space is a separable Banach space $(\R^n, \mid\mid . \mid\mid_W)$
        \item the constrained space $\P$ is nonempty, closed and
          convex.
        \item By strict convexity and coercivity of $\theta \mapsto \|\theta - \theta'\|_W^2$ on the closed convex set $\mathcal{P}$, the projection $T_{\mathcal{P}}(\theta')$ exists and is
unique for every $\theta' \in \R^n$, so that $T_{\mathcal{P}}$ defines a map from $\mathbb{R}^n$ to $\mathcal{P}$. Moreover, as a metric projection onto a closed convex set, $T_{\mathcal{P}}$ is
$1$-Lipschitz for $\|\cdot\|_W$, hence Borel measurable; the push-forward measure \eqref{eq:pushfoward} is therefore well-defined.
    \end{itemize}
Thanks to all those checked properties, we can use Theorem 2 in \cite{astfalck2026posteriorprojectioninferenceconstrained}. 
 \end{proof}

 \section{Proof of the density expression in \ref{dummy_example}}
 \label{proof:dummy_example}
 \begin{proof}
    Let $h$ be a Borel function over $\R^2$, we can set the following decomposition:

\begin{equation*}
    \E \lbrack h(\theta)\rbrack = \E \lbrack h(\theta ) \1_{\theta_1 < \theta_2}\rbrack +
    \E \lbrack h(\theta) \1_{\theta_1 > \theta_2}\rbrack.
\end{equation*}
We set $\phi_M$ the density function of $\mathcal{N}_2(\mu_M, \sigma_M^2 I_2)$. Since we are in the right part of the plane, we keep the age coordinates $\theta$ unchanged.
\begin{equation*}
     \E \lbrack h(\theta) \1_{\theta_1 < \theta_2}\rbrack = \int_{\R^2} h(\theta_1, \theta_2) \1_{\theta_1 < \theta_2} \phi_M(\theta_1, \theta_2) d\theta_1 d\theta_2.
\end{equation*}
In the other case $\theta_1 > \theta_2$, the coordinates are in the other part of the plane, and will be dragged to the diagonal $D$:
\[ \theta = \frac{\theta_1+\theta_2}{2}  \begin{pmatrix}
    1 \\ 1
\end{pmatrix}.\]
Thus, the second expectation would become: 
$$
\E \lbrack h(\theta ) \1_{\theta_1 > \theta_2}\rbrack =
  \E \lbrack h\bigg(  \frac{\theta_1+\theta_2}{2},
  \frac{\theta_1+\theta_2}{2}\bigg)  \1_{\theta_1 > \theta_2}\rbrack
  $$
  We define
 $$ \begin{cases}
        U &= \frac{\theta_1+\theta_2}{2} \\
        V&= \frac{\theta_1-\theta_2}{2} \
    \end{cases} 
    $$
The couple $(U,V)$ is a Gaussian vector with mean 
$\bigg(\frac{ \mu_{M_1} + \mu_{M_2}}{2} ; \frac{ (\mu_{M_1} - \mu_{M_2})}{2}\bigg)$ and variance $\sigma_M^2 /2$, hence the variables $U$ and $V$ are independent.\\
Under the change of variables $(\theta_1,\theta_2)\mapsto(U,V)$, the support
\[
\{\theta_1>\theta_2\}
\]
is mapped onto
\[
\{(U,V)\in\mathbb{R}\times(0,\infty)\},
\]
that is, $U\in\mathbb{R}$ and $V>0$.
Therefore, denoting by $f_U$  and $f_V$ the densities of $U$ and $V$, respectively, we have
    \begin{align*}
  \E \lbrack h(\theta ) \1_{\theta_1 > \theta_2}\rbrack 
    &= \int_{\R} h(u,u) f_U(u) \bigg( \int_{0}^{+\infty} f_V( v) dv  \bigg) du:=  \int_{\R} h(u,u)g_M(u) du,
\end{align*}
where 
$$
g_M(u) = f_U(u) \mathbb{P}(V>0)  = f_U(u) \Phi\bigg(\frac{ \mu_{M_1} - \mu_{M_2}}{\sqrt{2} \sigma_M}\bigg). 
$$

We can see that the density on the region outside $\P$, is the probability being outside $\P$ times a Gaussian ($\mathcal{N}(\frac{\mu_{M_1} + \mu_{M_2}}{2}, \frac{\sigma_M^2}{2})$) density over the diagonal $D$ and where the highest probability is at the centroid of respective means of each dimension $(\mu_{M_1}, \mu_{M_2})$.
\end{proof}

\section{Proof of Theorem  \ref{thm:posterior_simplex}}
\label{proof:posterior_simplex}
  By an affine change of variables, we may assume without loss of
    generality that $[\alpha,\beta]=[0,1]$ 
The proof of the Theorem \ref{thm:posterior_simplex} relies on two
following  elementary lemmas. Lemma  \ref{L1} provides a uniform
stochastic bound for linear forms over the monotone cone. The random
fluctuation term grows only at the order $\sqrt n$, uniformly over all monotone
vectors. Lemma \ref{L2} shows that, under any distribution with a
positive density on $[0,1]$, the order statistics almost surely spread
over the whole interval as the sample size increases.

Hereafter, we use the notation $M_n=O_{\mathbb P}(u_n)$ to mean that
$M_n/u_n$ is bounded in probability.

\begin{lemma}\label{L1}
Let \((\varepsilon_i)_{i\ge1}\) be independent and identically
distributed  centered $L^2$ random variables.  
Then, 
$$
\sup_{0\le \theta_1\le\cdots\le \theta_n\le1} \left| \sum_{i=1}^n\varepsilon_i\theta_i \right| = O_{\mathbb P}(\sqrt n).
$$
\end{lemma}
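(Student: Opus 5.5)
The plan is to use summation by parts (Abel transformation) to rewrite the linear form in terms of partial sums. Then the supremum over the monotone cone is controlled by the maximum of a random walk, and Kolmogorov's maximal inequality finishes the argument.

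\textbf{Step 1: Abel transformation.} Set $S_k=\sum_{i=1}^k\varepsilon_i$ and $S_0=0$. With $\theta_0:=0$, write $\theta_i=\sum_{k=1}^{i}(\theta_k-\theta_{k-1})$, so that
\begin{equation*}
\sum_{i=1}^n\varepsilon_i\theta_i=\sum_{k=1}^n(\theta_k-\theta_{k-1})\,(S_n-S_{k-1}).
\end{equation*}
For any $0\le\theta_1\le\cdots\le\theta_n\le1$, the increments $\theta_k-\theta_{k-1}$ are nonnegative and sum to $\theta_n\le1$. The right-hand side is therefore a sub-convex combination of the values $S_n-S_{k-1}$, which gives
\begin{equation*}
\sup_{0\le\theta_1\le\cdots\le\theta_n\le1}\left|\sum_{i=1}^n\varepsilon_i\theta_i\right|\le\max_{0\le k\le n}|S_n-S_k|\le 2\max_{0\le k\le n}|S_k|.
\end{equation*}
The upper bound does not depend on $\theta$, which is exactly what makes the supremum uniform. (Equivalently, one can note that the supremum of a linear form over the compact convex set $\{0\le\theta_1\le\cdots\le\theta_n\le1\}$ is attained at an extreme point. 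These extreme points are the step vectors $(0,\dots,0,1,\dots,1)$, which yields the same bound.)

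\textbf{Step 2: maximal inequality.} Let $s^2=\E[\varepsilon_1^2]$. Since the $\varepsilon_i$ are independent, centered and in $L^2$, Kolmogorov's maximal inequality gives, for every $t>0$,
\begin{equation*}
\mathbb{P}\left(\max_{k\le n}|S_k|>t\sqrt n\right)\le\frac{\E[S_n^2]}{t^2 n}=\frac{s^2}{t^2}.
\end{equation*}
This bound is uniform in $n$ and tends to $0$ as $t\to\infty$. Hence $\max_{k\le n}|S_k|=O_{\mathbb P}(\sqrt n)$, and by Step 1 the supremum is also $O_{\mathbb P}(\sqrt n)$.

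\textbf{Main obstacle and fallback.} The only real obstacle is Step 1, namely seeing that the supremum over an uncountable cone reduces to a maximum over $n+1$ partial sums. Once that reduction is made, the rest is standard. If one prefers to avoid Kolmogorov's inequality, Doob's $L^2$ maximal inequality applied to the martingale $(S_k)$ gives $\E[\max_k S_k^2]\le4ns^2$. Markov's inequality then yields the same conclusion.
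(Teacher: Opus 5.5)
Your proof is correct and follows the paper's route: summation by parts to bound the supremum over the monotone cone by $2\max_{k\le n}|S_k|$, then Kolmogorov's maximal inequality. The only cosmetic difference is that you write the sum with tail sums $S_n-S_{k-1}$, which gives the single bound $\max_k|S_n-S_k|$. The paper instead obtains $|S_n|\theta_n+\max_k|S_k|(\theta_n-\theta_1)$; both lead to the same $2\max_k|S_k|$ bound.
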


\begin{proof}
Denote  $S_k=\sum_{i=1}^k\varepsilon_i$ for $k\geq 1$ and  $S_0=0$. By summation by parts, we have 
$$
\sum_{i=1}^n\varepsilon_i\theta_i = S_n\theta_n- \sum_{k=1}^{n-1}S_k(\theta_{k+1}-\theta_k).
$$
For all $( \theta_1, \cdots, \theta_n) \in \cal P $, we get 
\begin{align*} 
\left| \sum_{i=1}^n\varepsilon_i\theta_i
\right|
& \le
|S_n|\theta_n+
\max_{1\le k\le n-1}|S_k|
          \sum_{k=1}^{n-1}(\theta_{k+1}-\theta_k) \\
  & =|S_n|\theta_n+
\max_{1\le k\le n-1}|S_k|
(\theta_{n}-\theta_1)  \le
2\max_{1\le k\le n}|S_k|.
\end{align*} 
By Kolmogorov's maximal inequality (see \cite{Billingsley2012}), 
$$
\max_{1\le k\le n}|S_k| = O_{\mathbb P}(\sqrt n).
$$
Consequently,
$$
\sup_{0\le \theta_1\le\cdots\le \theta_n\le1} \left| \sum_{i=1}^n\varepsilon_i\theta_i \right| = O_{\mathbb P}(\sqrt n).
$$

\end{proof}

\begin{lemma}\label{L2}
Let \(Y_1,\ldots,Y_n\) be i.i.d. random variables with positive density
\(q\) on \([0,1]\).  Denote $Y_{(1)}$ (resp.  $Y_{(n)}$) the minimum
    (resp maximum)  of the sample.    Then, for every
    \(\delta\in(0,1)\), there exists \(c_\delta>0\) such that 
\begin{equation} 
\mathbb P\left(Y_{(n)}-Y_{(1)}\le 1-\delta\right) \le 2e^{-c_\delta n}.
\end{equation}
\end{lemma}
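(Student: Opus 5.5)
The event is a union of two tail events, $\{Y_{(1)}>\delta/2\}\cup\{Y_{(n)}<1-\delta/2\}$, so I would bound each tail separately and use a union bound. First I check the inclusion. If both $Y_{(1)}\le \delta/2$ and $Y_{(n)}\ge 1-\delta/2$, then $Y_{(n)}-Y_{(1)}\ge 1-\delta$. Equality occurs only on a null event, because $Y_{(n)}-Y_{(1)}$ has a continuous distribution. I therefore work with the strict version $\{Y_{(n)}-Y_{(1)}<1-\delta\}$, or equivalently replace $\delta/2$ by $\delta/3$ to avoid boundary issues altogether. In either case the complement of the target event forces at least one of $Y_{(1)}>\delta/3$ or $Y_{(n)}<1-\delta/3$.

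Next I bound each tail using independence. Write $p_\delta=\min\bigl(\int_0^{\delta/3}q,\ \int_{1-\delta/3}^1 q\bigr)$. Then
\[
\mathbb P(Y_{(1)}>\delta/3)=\Bigl(1-\int_0^{\delta/3}q(y)\,dy\Bigr)^n\le (1-p_\delta)^n\le e^{-p_\delta n}.
\]
The same bound holds for $\mathbb P(Y_{(n)}<1-\delta/3)$. Setting $c_\delta=p_\delta$ and summing the two tails gives the factor $2$ in the statement.

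The only point that needs care is showing $p_\delta>0$. Since $q$ is a positive density on $[0,1]$, it is strictly positive almost everywhere there. Hence its integral over any subinterval of positive length is strictly positive, and in particular $\int_0^{\delta/3}q>0$ and $\int_{1-\delta/3}^1 q>0$. No continuity or lower bound on $q$ is required.

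I also check that $c_\delta$ depends only on $\delta$ and $q$, not on $n$, which the construction guarantees. The rest is routine.
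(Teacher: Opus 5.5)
Your proof is correct and follows the paper's argument: a union bound over the two events that no observation lands near $0$ or near $1$, each bounded by $\bigl(1-\int q\bigr)^n$ using independence. The differences are cosmetic. The paper uses the half-open intervals $[0,\delta/2)$ and $(1-\delta/2,1]$, which already give a strict inequality, so no null-set remark is needed, and it takes $c_\delta=-\log\max(q_1,q_2)$; you shrink to $\delta/3$ and use $1-p\le e^{-p}$ to obtain $c_\delta=p_\delta$.
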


\begin{proof}
Fix $\delta\in(0,1)$. Observe that
$$
\{Y_{(n)}-Y_{(1)}\le 1-\delta\}
\subset
\left\{Y_i\notin[0,\delta/2)\ \forall i\right\}
\cup
\left\{Y_i\notin(1-\delta/2,1]\ \forall i\right\}.
$$
Hence,
\begin{align*}
\mathbb P\!\left(Y_{(n)}-Y_{(1)}\le 1-\delta\right)
&\le
\left(1-\int_0^{\delta/2}q(x)\,dx\right)^n
+
\left(1-\int_{1-\delta/2}^{1}q(x)\,dx\right)^n  \\
&=: q_1^n+q_2^n
\le
2\max(q_1,q_2)^n,
\end{align*}
where $0<q_1,q_2<1$ because the density $q$ is positive on
$[0,1]$. Therefore,
$$
\mathbb P\!\left(Y_{(n)}-Y_{(1)}\le 1-\delta\right)
\le
2e^{-c_\delta n},
$$
for the positive constant $ c_\delta=-\log\!\bigl(\max(q_1,q_2)\bigr).$
This completes the proof.
\end{proof}

We now combine the previous two lemmas to establish the asymptotic
behavior of the constrained posterior distribution when all the true
ages are equal.

\begin{proof}{Theorem \ref{thm:posterior_simplex}}
The posterior density is proportional to
\begin{align*}
\pi_{\cal P} ( \theta_1,...,\theta_n | M) & \propto \exp\left\{ -\frac1{2\sigma^2 }\sum_{i=1}^n  (M_i-\theta_i)^2
\right\} \1_{\mathcal P }(\theta) \\ & \propto \exp\left\{
  -\frac1{2\sigma^2 }\sum_{i=1}^n  (\theta_i-\theta^0)^2 + \frac1{\sigma}  \sum_{i=1}^n\varepsilon_i(\theta_i-\theta^0) \right\} \1_{\mathcal P }(\theta).
\end{align*}

Define
$$
R_n(\theta)=\sum_{i=1}^n \varepsilon_i(\theta_i-\theta^0),
$$
and let $g_n$ be the probability density such that
$$
g_n(\theta)
\propto
\exp\!\left\{
-\frac{1}{2\sigma^2}
\sum_{i=1}^n(\theta_i-\theta^0)^2
\right\}
\mathbf 1_{\mathcal P}(\theta), 
$$
which  is the
joint density of the order statistics of $n$ independent and identically
distributed random variables from the Gaussian distribution
$\mathcal N(\theta^0,\sigma^2)$ truncated to the interval $[0,1]$.
By Lemma~\ref{L2}, for every $\delta\in(0,1)$,
$$
\int_{\{\theta_n-\theta_1\le 1-\delta\}}
g_n(u_1,\ldots,u_n)\,du
\le
2e^{-c_\delta n}.
$$

Moreover,
$$
\sup_{0\le\theta_1\le\cdots\le\theta_n\le1}
\left|
\sum_{i=1}^n\varepsilon_i(\theta_i-\theta^0)
\right|
\le
\sup_{0\le\theta_1\le\cdots\le\theta_n\le1}
\left|
\sum_{i=1}^n\varepsilon_i\theta_i
\right|
+\theta^0\left|\sum_{i=1}^n\varepsilon_i\right|.
$$

Using Lemma \ref{L1} and  $\sum_{i=1}^n \varepsilon_i  =O_{\mathbb
  P}(\sqrt n)$, it follows that 
\begin{equation}\label{mn}
M_n := \sup_{A\in\mathcal P } |R_n(A)| = O_{\mathbb P}(\sqrt n).
\end{equation}

The posterior distribution can therefore be written as
$$
\pi_{\cal P}(\theta \mid M) = \frac{ \exp\{R_n(\theta)/\sigma \} g_n(\theta) }{ \int_{\mathcal P }
  \exp\{R_n(u)/\sigma \} g_n(u) \,du}.
$$

We fix  $\delta \in (0,1 ) $. The posterior probability of the event  $[\theta_n
-\theta_1 \le1-\delta ]$ is 
\begin{align*}
\Pi_{\cal P}( \theta_n -\theta_1 \le1-\delta  \mid M) & = \frac{ \displaystyle \int_{\lbrace \theta_n
    -\theta_1 \le1-\delta \rbrace} \exp\{R_n(\theta)/\sigma \} g_n(\theta) d\theta  }{
  \displaystyle \int_{\mathcal P } \exp\{R_n(\theta)/\sigma \}  g_n(\theta)
  d\theta}  \\ &\leq 
\dfrac {e^{M_n/\sigma} \int_{ \lbrace \theta_n
    -\theta_1 \le1-\delta \rbrace }  g_n(\theta)
                                 d\theta  }{e^{-M_n/\sigma} }\\
                       & \leq 2e^{2M_n/\sigma }e^{-c_\delta n}
\end{align*}
Using \eqref{mn} and  $c_\delta > 0$, we get 
$$
\Pi_{\mathcal P}\!\left(\theta_n-\theta_1\le 1-\delta\mid M\right)
\xrightarrow{\mathbb{P}_0-proba} 0.
$$
\end{proof}

\section{Proof of Theorem \ref{thm:projected_posterior}}
\label{proof:projected_posterior}

\begin{proof}
The unconstrained posterior is the product of Gaussian measures 
$
\Pi(\cdot \mid M) \;=\; \bigotimes_{i=1}^{n} \mathcal{N}(M_i, \sigma^2
).$ In particular, all marginal posterior variances equal $\sigma^{2}$, so the
weights in \eqref{eq:QP} satisfy $w_i=\sigma^{-2}$ for all $i$. The common
factor $\sigma^{-2}$ leaves the minimizer unchanged, and
$T_\mathcal{P}$ reduces to the unweighted isotonic projection. 

Let $(\varepsilon_i)_{i \ge 1}$ be i.i.d.\ standard normal random variables
and $(U_i)_{i \ge 1}$ be i.i.d.\ uniform random variables on $(0,1)$, the
two sequences being independent, all defined on a common probability space
with probability measure $\mathbb{P}_J$. Define, for all $i \ge 1$,
$$
M_i = \theta^0 + \sigma\,\varepsilon_i,
\qquad
\theta'_i = \Phi^{-1}(U_i) \sigma +M_i ,
$$
where $\Phi$  is the cumulative distribution function of the standard
Gaussian distribution  $\mathcal{N}(0 ,   1)$.
By construction, the marginal distribution of $M = (M_1, \dots, M_n)$ under
$\mathbb{P}_J$ is $\mathbb{P}_0$ and, conditionally on $M$, the distribution  
$\theta'_1, \dots, \theta'_n$ is, the unconstrained
posterior $\Pi(\cdot \mid M)$. Setting $\theta=
T_{\mathcal{P}}(\theta')$, the conditional distribution of $\theta$ given
$M$ is the projected posterior $\tilde\Pi(\cdot \mid M)$, whence
\begin{equation}\label{eq:coupling}
\tilde\Pi\bigl(\theta_{\lfloor bn\rfloor} - \theta_{\lfloor an\rfloor}
> \eta \bigm| M\bigr)
= \mathbb{P}_J\bigl(\theta_{\lfloor bn\rfloor} -
\theta_{\lfloor an\rfloor} > \eta \bigm| M\bigr)
\quad \text{almost surely.}
\end{equation}

Since
\[
\theta'_i \;=\; h(\varepsilon_i, U_i)
\;=\; \theta^0 + \sigma\,\varepsilon_i + \sigma\,\Phi^{-1}(U_i),
\]
for a measurable function $h$ not depending on $i$, and the pairs
$(\varepsilon_i, U_i)$ are i.i.d., the variables $\theta'_i$, $i \ge 1$, are
i.i.d.\ under $P_J$ and integrable, with mean
$\mu = \mathbb{E}_J[\theta'_1] = \theta^0$.
Set
$$
Y = \theta' - \mu \mathbf{1}_n,   \qquad
S_k = \sum_{i=1}^{k} Y_i, \qquad S_0 = 0,
$$
$Y_1, \dots$ are i.i.d.with zero mean.
Moreover, we have 
$$
\theta =  T_{\mathcal{P}}(\theta') = \mu \mathbf{1}_n + T_{\mathcal{P}}(Y).
$$

By the max--min characterization of isotonic regression 
\citep{robertson_order_1988}, for $j = \lfloor an \rfloor$,
$$
T_{\mathcal{P}}(Y)_{\lfloor an\rfloor}
= \max_{u \le \lfloor an\rfloor}\, \min_{v \ge \lfloor an\rfloor}\,
\frac{S_v - S_{u-1}}{v - u + 1} .
$$

Taking $u = 1$ in the maximum and $v = n$ in the minimum gives
$$
- \frac{\max_{1 \le k \le n} |S_k|}{\lfloor an \rfloor}
\;\le\;
T_{\mathcal{P}}(Y)_{\lfloor an\rfloor}
\;\le\;
\frac{\max_{0 \le k \le n} |S_n - S_k|}{n - \lfloor an \rfloor + 1} .
$$
By the strong law of large numbers, $S_k / k \to 0$ almost surely, which
implies $\max_{1 \le k \le n} |S_k| / n \to 0$ $\mathbb{P}_J $-almost surely. 
Therefore,  both bounds converge to $0$: 
$$
T_{\mathcal{P}}(Y)_{\lfloor an\rfloor} \longrightarrow 0
\qquad \mathbb{P}_J\text{-almost surely,}
$$
and
$T_{\mathcal{P}}(Y)_{\lfloor bn\rfloor} \to 0$ almost surely. Consequently,
$$
\theta_{\lfloor bn\rfloor} - \theta_{\lfloor an\rfloor}
= T_{\mathcal{P}}(Y)_{\lfloor bn\rfloor}
- T_{\mathcal{P}}(Y)_{\lfloor an\rfloor}
\longrightarrow 0
\qquad \mathbb{P}_J\text{-almost surely,}
$$
and in particular
$$
\mathbb{P}_J\bigl(\theta_{\lfloor bn\rfloor} -
\theta_{\lfloor an\rfloor} > \eta\bigr) \longrightarrow 0 .
$$

Taking expectations in \eqref{eq:coupling} and using the tower property,
$$
\mathbb{E}_0\Bigl[
\tilde\Pi\bigl(\theta_{\lfloor bn\rfloor} - \theta_{\lfloor an\rfloor}
> \eta \bigm| M\bigr)
\Bigr]
= \mathbb{P}_J\bigl(\theta_{\lfloor bn\rfloor} -
\theta_{\lfloor an\rfloor} > \eta\bigr)
\longrightarrow 0 .
$$
Finally, for every $\tau > 0$, Markov's inequality yields
$$
\mathbb{P}_0\Bigl(
\tilde\Pi\bigl(\theta_{\lfloor bn\rfloor} - \theta_{\lfloor an\rfloor}
> \eta \bigm| M\bigr) > \tau
\Bigr)
\le \frac{1}{\tau}\,
\mathbb{E}_0\Bigl[
\tilde\Pi\bigl(\theta_{\lfloor bn\rfloor} - \theta_{\lfloor an\rfloor}
> \eta \bigm| M\bigr)
\Bigr]
\xrightarrow{n\to \infty} 0 ,
$$
that is,
$$
\tilde\Pi\bigl(\theta_{\lfloor bn\rfloor} - \theta_{\lfloor an\rfloor}
> \eta \bigm| M\bigr)
\xrightarrow{\;\mathbb{P}_0-proba\;} 0 ,
$$
which completes the proof.
\end{proof}

\section*{Acknowledgements}
Authors are supported by the Quina World project, ERC starting grant \#851793 awarded to Guillaume Gu\'erin.

\bibliography{DOCTORAT}

\end{document}